\newcommand{\ketbra}[2]{|{#1}\rangle \langle {#2}|}
\newcommand{\ketbrat}[4]{| \widetilde{#1}_{#2} \rangle \langle \widetilde{#3}_{#4}|}
\newcommand{\ket}[1]{| {#1} \rangle}
\newcommand{\tket}[2]{\ket{{\widetilde{#1}}_{#2}}}
\newcommand{\braket}[2]{\langle #1 | #2 \rangle}
\newcommand{\tbraket}[4]{\langle \widetilde{#1}_{#2} | \widetilde{#3}_{#4} \rangle}
\newcommand{\ens}{\mathcal{E}(r_1,r_2,\cdots,r_m)}
\newcommand{\pro}{\mathcal{P}(r_1,r_2,\cdots,r_m)}
\newtheorem{theorem}{Theorem:}[subsection]
\newtheorem{corollary}[theorem]{Corollary:}
\title{Algebraic Structure of the Minimum Error Discrimination Problem for Linearly Independent Density Matrices}
\author[1]{Tanmay Singal\thanks{stanmay@imsc.res.in}}
\author[1]{Sibasish Ghosh\thanks{sibasish@imsc.res.in}}
\affil[1]{Optics and Quantum Information Group, Institute of Mathematical Sciences, CIT Campus, Taramani, Chennai, 600 113, India}
\begin{document}

\maketitle

\begin{abstract}
The minimum error discrimination problem for ensembles of linearly independent pure states are known to have an interesting structure; for such a given ensemble the optimal POVM is given by the pretty good measurment of another ensemble which can be related to the former ensemble by a bijective mapping $\mathscr{R}$ on the ``space of ensembles''. In this paper we generalize this result to ensembles of general linearly independent states (not necessarily pure) and also give an analytic expression for the inverse of the map, i.e., for $\mathscr{R}^{-1}$. In the process of proving this we also simplify the necessary and sufficient conditions that a POVM needs to satisfy to maximize the probability of success for the MED of an LI ensemble of states. This simplification is then employed to arrive at a rotationally invariant necessary and sufficient conditions of optimality. Using these rotationally invariant conditions it is established that every state of a LI mixed state ensemble can be resolved to a pure state 
decomposition so that the corresponding pure state ensemble (corresponding to pure states of all mixed states together) has as its optimal POVM a pure state decomposition of the optimal POVM of mixed state ensemble. This gives the necessary and sufficient conditions for the PGM of a LI ensemble to be its optimal POVM; another generalization for the pure state case. Also, these rotationally invariant conditions suggest a technique to give the optimal POVM for an ensemble of LI states. This technique is polynomial in time and outpeforms standard barrier-type interior point SDP in terms of computational complexity.

\end{abstract}
\section{Introduction}
\label{intro}

Minimum Error Discrimination (MED) is a state hypothesis testing problem in quantum state discrimination. The setting is as follows: Alice selects a state $\rho_i$ with probability $p_i>0$ from an ensemble of $m$ states $\widetilde{P}=\{ p_i >0, \rho_i \}_{i=1}^{m}$, and sends it to Bob, who is then tasked to find the index $i$ from the set $\{1,2,\cdots, m\}$, by performing measurement on the state he receives. His measurement is a generalized POVM of $m$ elements $E = \{ E_i \}_{i=1}^{m}$, and his strategy for hypothesis testing is based on a one-to-one correspondence between the states $ \rho_i \in \widetilde{P}$ and POVM elements $E_i \in E$ such that he will declare having been given $\rho_j$ when his measurement yields the $j$-th outcome. Since the states $\rho_1, \; \rho_2, \; \cdots, \rho_m$ are not necessarily orthogonal they aren't perfectly distinguishable, i.e., there doesn't exist a measurement such that $Tr \left( \rho_i E_j \right) = \delta_{i,j} Tr \left( \rho_i E_i \right), \; \forall \; 1 \
leq i,j \leq m$ unless $Tr \left( \rho_i \rho_j \right) = \delta_{i,j} Tr \left( \rho_i^2 \right), \; \forall \; 1 \leq i,j \leq m$. That $Tr \left( \rho_i E_j \right) \neq 0$ for some $i \neq j$ implies that there may arise a situation where Alice sends the state $\rho_i$ but Bob's measurement yields the $j$-th outcome which leads him to conclude that Alice gave him $\rho_j$. This is an error. The average probability of such error is given by:

\begin{equation}
\label{Pe}
P_e = \sum_{\substack{i,j=1 \\ i\neq j}}^m p_i Tr( \rho_i \Pi_j)
\end{equation}

where $\{ \Pi_i \}_{i=1}^{m}$ represents an m element POVM with $ \Pi_i \geq 0$ and $ \sum_{i=1}^{m} \Pi_i = \mathbb{1}$. 

The average probability of success is given by:

\begin{equation}
 \label{Ps}
P_s =  \sum_{i=1}^m p_i Tr( \rho_i \Pi_i)
\end{equation}

Both probabilities sum up to 1:

\begin{equation}
\label{Sum1}
P_s + P_e =1
\end{equation}

In MED we are given an ensemble and tasked with finding the maximum value that the $P_s$, as defined in equation \eqref{Ps}, attains over the ``space'' of $m$ element POVMs\footnote{It is appropriate to call the set of $m$ element POVMs a space because it is a convex set, inherently implying that there is a notion of additon and scalar multiplication defined between any two points in the set. The restrictions are on the fact that all linear combinations of elements much be convex combinations. Additionally this space is compact.} and the points in the space of $m$ element POVMs where this maximum value is attained.

\begin{equation}
\label{Pmax}
P_{s}^{\text{max}} = \text{Max} \{ P_s \; | \; \{ \Pi_i \}_{i=1}^{m},  \; \Pi_i \geq 0, \; \sum_{i} \Pi_i = \mathbb{1}\} = 1- P_{e}^{\text{min}}
\end{equation}

Despite the innocuous nature of the problem there have been fairly limited class of ensembles for which the problem has been solved analytically. This includes any ensemble with just two states, i.e., when $m=2$ \cite{Hel}, ensembles of any number where the states are equiprobable and lie on the orbit of a unitary \cite{Ban, Chou}, an ensemble of $3$ qubits \cite{Ha}\footnote{In \cite{Ha} a the general recipe to obtain the optimal POVM for an ensemble of any number of qubits states has been lain down.}, and all pure state ensembles for which the pretty good measurement (PGM) associated with a LI pure state ensemble is its optimal POVM as well \cite{Sasaki}. 

In \cite{Mas} it was shown that there exists a relation between an ensemble $\widetilde{P}$ and another ensemble $\widetilde{Q} = \{ q_i \geq 0, \sigma_i \}_{i=1}^{m}$, with the condition that $ supp \left( q_i \sigma_i \right) \subseteq supp \left( p_i \rho_i \right)$, $\forall \; 1 \leq i \leq m$, such that the optimal POVM for MED of $\widetilde{P}$ is given by the pretty good measurement (PGM) of $\widetilde{Q}$. In the case of linearly independent pure state ensembles (LIP), it is known that $\sigma_i = \rho_i, \; \forall \; 1 \leq i \leq m$, and it is also known that $\widetilde{Q}$ is given as a function of $\widetilde{P}$. This function is invertible and an analytic expression for the inverse of the function is known. This relation between a LI pure state ensemble and its optimal POVM is of significance in finding the optimal POVM \cite{Singal}. It is, hence, desirable to know if such a function exists for other classes of ensembles too. 

In \cite{Carlos} it was shown that such a function isn't definable for linearly dependent pure state ensembles. What about mixed states? From \cite{Yohina} we know that the optimal POVM for an ensemble of LI states is a projective measurement where the rank of the $i$-th projector equals the rank of the $i$-th state in the ensemble. As we will later show, this itself exhibits that $rank \left( p_i \rho_i \right) = rank \left( q_i \sigma_i \right)$, $\forall \; 1 \leq i \leq m$, and, since $ supp \left( q_i \sigma_i \right) \subseteq supp \left( p_i \rho_i \right)$, $\forall \; 1 \leq i \leq m$, this implies that $ supp \left( q_i \sigma_i \right) = supp \left( p_i \rho_i \right)$, $\forall \; 1 \leq i \leq m$. This gives us an indication that the aforementioned function may be definable in the general LI state case, i.e., when the states aren't necessarily pure. 

In this paper we establish that such a function is definable and that it is an invertible function as well. Additionally, we give an analytic expression for the inverse function. In the process we also simplify the necessary and sufficient condition that a POVM has to satisfy to be the optimal POVM for an ensemble of linearly independent states. Also, the necessary and sufficient condition is brought to a rotationally invariant form. This form can be exploited to obtain the optimal POVM for the MED of any LI ensemble. These rotationally invariant conditions tell us that for for each ensemble of LI states, there is a corresponding pure state decomposition such that the ensemble corresponding to this pure state decomposition has an optimal POVM which is itself a pure state decomposition of the optimal POVM for the mixed state ensemble. This fact is used to show when the pretty good measurement of an LI ensemble is its optimal measurment; this is also a generalization of the pure state case. Also, the 
rotationally invariant conditions suggest a recipe to obtain the optimal POVM for a LI ensemble of states. This technique is polynomial in time and simple to use. 

The paper is divided into various sections as follows: section \eqref{OPTC} gives the known optimal conditions for the MED of any general ensemble; section \eqref{Mixed} first introduces what is known so far about MED for LI state ensembles and then goes onto establish the main result of the paper, i.e., that every LI state ensemble can be mapped to another LI state ensemble through an invertible map, such that the PGM of the image of the ensemble under the map is the optimal POVM for the MED of the corresponding pre-image ensemble. Establishing the existence of such a map requires a simplification of the known optimality conditions in the case for LI ensembles which we prove. In the same section we also obtain an analaytic expression for the inverse of this map. In section \eqref{compareMEDP} we compare the problem of MED for general LI mixed ensembles with the problem of MED for LI pure state ensembles which are defined on the same Hilbert space $\mathcal{H}$. It is shown that for every LI mixed state 
ensemble has a pure state decomposition whose optimal POVM is itself a pure state decomposition of the optimal POVM of the mixed state ensemble. Section \eqref{Solution} employs the results developed in section \eqref{Mixed} to give an efficient and simple numerical technique to obtain the optimal POVM for the MED of any LI ensemble. 

\section{The Optimum Conditions}
\label{OPTC}

Alice picks a state $\rho_i$ with probability $p_i$ from the ensemble $\widetilde{P} = \{ p_i, \rho_i \}_{i=1}^{m}$ and hand it to Bob for MED. The states $\rho_1, \rho_2, \cdots, \rho_m$ act on a Hilbert space $\mathcal{H}$ of dimension $n$ and $supp \left(p_1\rho_1 \right)$, $supp \left(p_2 \rho_2 \right)$, $\cdots$, $supp \left(p_m\rho_m \right)$ together span $\mathcal{H}$. Bob's task is the optimization problem given by equation \eqref{Pmax}. This optimization is over the space of of $m$ element POVMs, i.e., the space given by $\left\{ \left\{ \Pi_i \right\}_{i=1}^{m}, \text{ where $\Pi_i \geq 0, \; \forall \; 1 \leq i \leq m,  \; \sum_{i}^{m} \Pi_i = \mathbb{1}$} \right\}$, where $\mathbb{1}$ is the identity operator on $\mathcal{H}$. To every constrained optimization problem (called the primal problem) there is a dual problem which provides a lower bound if primal problem is a constrained minimization or an upper bound if the primal problem is a constrained maximization to the objective function being 
optimized in the primal problem. Under certain conditions these bounds are tight implying that one can obtain the solution for the primal problem from its dual. We then say that there is no duality gap between both problems \cite{Boyd}. 

For MED there is no duality gap and the dual problem can be solved to obtain optimal POVM. This dual problem is given as follows \cite{Yuen}:

\begin{equation}
\label{dual}
\text{Min} \; \text{Tr}(Z) \; \ni \; Z-p_i \rho_i \geq 0, \; \forall\; 1 \leq i \leq m.
\end{equation}

Also the optimal $m$-element POVM will satisfy the complementarity slackness condition:

\begin{equation}
\label{cslack}
(Z- p_i \rho_i)\Pi_i= \Pi_i(Z-p_i \rho_i)=0, \, \forall \, 1\leq i \leq m.
\end{equation}

Now summing over $i$ in equation \eqref{cslack} and using the fact that $ \sum_{i=1}^{m} \Pi_i = \mathbb{1}$ we get:

\begin{equation}
\label{Z}
Z= \sum_{i=1}^{m} p_i \rho_i \Pi_i = \sum_{i}^{m} \Pi_i p_i \rho_i.
\end{equation}

Using equation \eqref{Z} in equation \eqref{cslack}, we get:

\begin{eqnarray}
& \Pi_j ( Z - p_i \rho_i) \Pi_i = \Pi_j ( Z - p_j \rho_j) \Pi_i, &\; \forall \; 1 \leq i,j \leq m  \notag \\
\label{St} 
 \Rightarrow & \Pi_j ( p_j \rho_j - p_i \rho_i ) \Pi_i =0, & \; \forall \; 1 \leq i,j \leq m
\end{eqnarray}

Equation \eqref{St} was derived by Holevo \cite{Hol}, separately, without using the dual optimization problem stated in the problem \eqref{dual}. Equation  \eqref{cslack} and equation \eqref{St} are equivalent to each other. These are necessary but not sufficient conditions. Of the set of $m$ element POVMs which satisfy equation \eqref{cslack} (or equivalently equation \eqref{St}) only a proper subset is optimal. This optimal POVM will satisfy the global maxima conditions given below:

\begin{eqnarray}
 & Z \geq p_i \rho_i \notag  &, \; \forall \; 1 \leq i \leq m,\\
\label{Glb}
\Longrightarrow & \sum_{k=1}^{m} p_k \rho_k \Pi_k - p_i \rho_i \geq 0,& \; \forall \; 1 \leq i \leq m.     
\end{eqnarray}
 
Thus the necessary and sufficient conditions for the $m$-element POVM(s) to maximize $P_s$ are given by equations \eqref{cslack} (or equivalently, equation \eqref{St}) and condition \eqref{Glb}. 

\section{Linearly Independent States}
\label{Mixed}

Let $\mathcal{H}$ be an $n$ dimensional Hilbert space. Consider a set of $m$ $(\leq n)$ LI states in $\mathcal{H}$, denoted by $P =\{ \rho_i \}_{i=1}^{m}$, where $\rho_i \in \mathcal{B}(\mathcal{H}), \; \rho_i \geq 0, \; Tr(\rho_i)=1,\; \forall \; 1 \leq i \leq m$. Let $r_i \equiv \text{rank}(\rho_i), \;\forall \; 1 \leq i \leq m$. Also let $\sum_{i=1}^m r_i = n$. This implies that $\mathcal{H}$ is fully spanned by supports of $\rho_1, \rho_2, \cdots, \rho_m$ and that the supports of $\rho_1, \rho_2, \cdots, \rho_m$ are linearly independent. Let elements within $P$ be indexed in descending order of $r_i$, i.e.,  $r_{i}\geq r_{i+1}, \; \forall \; 1 \leq i \leq m-1$. Consider $T  \in \mathcal{B}(\mathcal{H})$ to be non-singular; construct an ensemble $\widetilde{P}'=\{ p_i', \rho_i'\}_{i=1}^{m}$ by a congruence transformation on elements of $P$ by $T$ in the following manner:
 
\begin{subequations}
\begin{equation}
\label{cojens1}
\rho'_i \equiv \dfrac{ T \rho_i T^{\dag}  }{ Tr(T \rho_i T^{\dag})},   
\end{equation}
\begin{equation}
\label{cojens2} 
p'_i \equiv \frac{Tr( T \rho_i T^{\dag}) }{ \sum_{\substack{j=1}}^{m}   Tr( T \rho_j T^{\dag}) }. 
\end{equation}
\end{subequations}

Note (i) $\widetilde{P}'=\{p'_i >0, \, \rho'_i \}_{i=1}^{m}$ is an ensemble of $m$ linearly independent states (ii) rank$(\rho'_i)=r_i, \; \forall \; 1 \leq i \leq m$. 

Let's denote the transformations in equations \eqref{cojens1} and \eqref{cojens2} concisely by: $\widetilde{P}'= T P  T^{\dag}$. Using this define the following set:
\begin{equation}
\label{ens1}
\mathcal{E}(r_1,r_2,\cdots, r_m) \equiv \; \{ T P T^{\dag} \; | \; T \in \mathcal{B}(\mathcal{H}), \; det(T)\neq 0\}
\end{equation}

$\mathcal{E}(r_1,r_2,\cdots, r_m)$ is the set of LI ensembles where the $i$-th state has rank $r_i$. This is a $2n^2 - \sum_{i=1}^{m} r_{i}^{2} -1$ real parameter space. If $r_{k}=r_{k+1}=\cdots = r_{k+s-1}$, then a single ensemble can be represented by $s!$ elements in $\ens$, all of which are equivalent to each other upto a permutation among the $k \text{-th}, (k+1)\text{-th}, \cdots, (k+s-1)\text{-th}$ states\footnote{Allowing for this multiplicity is  \emph{just} a matter of convenience, i.e., one could adopt more criteria to do away with such multiplicities but that complicates the description of $\ens$ and, for that purpose, such a description is avoided.}. Let us now list what is known so far about the optimal POVMs for MED of LI ensembles. 

For the case of pure state ensembles (LIP), i.e., when $r_i=1,$ $\forall \; i = 1,2,\cdots,m$\footnote{Note that in this case $m=n$.}, it is already well known that the optimal POVM is given by a unique rank-one projective measurement \cite{Ken}. There is a corresponding result for general LI ensembles and that was explicitly proved in \cite{Yohina}, although it could also be inferred from \cite{Mas}. Therein, it was shown that the optimal POVM for MED of a LI ensemble $\widetilde{P}$ of $m$ states with ranks $r_1, r_2, \cdots, r_m$ respectively, i.e., such that $\widetilde{P} \in \ens$, is given by a POVM $\{ \Pi_i \}_{i=1}^{m}$ with the relation $rank(\Pi_i)=r_i, \; \forall \; 1 \leq i \leq m$. Note that the linear independence of the states $\rho_1, \rho_2, \cdots, \rho_m$, is contained in the relation: $\sum_{i=1}^{m}r_i = \; dim\mathcal{H} \; (=n)$ and this relation along with the aforementioned condition, that $rank\left( \Pi_i \right)=r_i$, implies that $\{ \Pi_i \}_{i=1}^{m}$ \emph{has to be} 
a projective measurement, i.e, $\Pi_i \Pi_j = \delta_{ij} \Pi_i, \; \forall \; 1 \leq i,j \leq m$. The relation $rank\left( \Pi_i \right) = r_i$ also ensures that the optimal POVM is unique. To establish this consider a case where we know that two $m$-element POVMs are optimal for the MED of some LI ensemble in $\ens$; let these optimal POVMs (which are projective measurments) be denoted by $\{ \Pi_i^{ \left( 1 \right)} \}_{i=1}^{m}$ and $\{ \Pi_i^{ \left( 2 \right)} \}_{i=1}^{m}$. The rank condition tells us that $rank \left( \Pi_i^{ \left( 1 \right) }  \right)= rank \left( \Pi_i^{ \left( 2 \right) } \right)  = r_i$, $\forall \; 1 \leq i \leq m$. The only way that a convex combination of both POVMs of the form $\{ p \Pi_i^{ \left( 1 \right) } + (1-p) \Pi_i^{ \left( 2 \right) } \}_{i=1}^{m}$ ( where $ 0 < p < 1$)\footnote{We need to ensure that the POVM, which is a convex combination, is also an $m$ element POVM. That is why convex combinations are only taken in this form.} also satisfies the rank condition (
that $rank \left( p \Pi_i^{ \left( 1 \right) } + (1-p) \Pi_i^{ \left( 2 \right) } \right) = r_i$, $\forall \; 1 \leq i \leq m$) is if $ \Pi_i^{ \left( 1 \right) } =  \Pi_i^{ \left( 2 \right) }$, $\forall \; 1 \leq i \leq m$. Another way of saying the same thing is that for $0 < p <1$, $\{ p \Pi_i^{\left( 1 \right) } + (1-p) \Pi_i^{ \left( 2 \right) } \}_{i=1}^{m}$ is a projective measurement iff $ \Pi_i^{ \left( 1 \right) } =  \Pi_i^{ \left( 2 \right) }$, $\forall \; 1 \leq i \leq m$.This implies that for MED of any LI ensemble, the optimal POVM is unique. 

We now define a set, which we denote by $\mathcal{P}(r_1,r_2,\cdots,r_m)$. An element $\{ \Pi_i \}_{i=1}^{m} \in \pro$ has the properties: (i) $\sum_{i=1}^{m} \Pi_i = \mathbb{1}$ (ii) $Rank(\Pi_i)=r_i,\; \forall \; 1 \leq i \leq m$  (iii) $\Pi_i \Pi_j = \delta_{ij} \Pi_i$. As noted before, (i) and (ii), along with the relation $\sum_{i=1}^{m} r_i = dim\mathcal{H}$, imply (iii) to hold true. Thus $\mathcal{P}\left(r_1,r_2,\cdots,r_m \right)$ is a subset of the set of projective measurements on $\mathcal{H}$. $\pro$ is an $n^2 - \sum_{i=1}^{m} r_i^{2}$ real parameter set.

The uniqueness of the optimal POVM for MED of an ensemble of LI states implies that one can unambiguously define ``the optimal POVM map" from $\ens$ to $\pro$. Let the optimal POVM map be denoted by $\mathscr{P}$. Then $\mathscr{P}: \ens \longrightarrow \pro$ is such that $\mathscr{P}(\widetilde{P})$ is the unique optimal POVM in $\pro$ for the MED of any ensemble $\widetilde{P} \in \ens$.

In \cite{Mas} it was shown that the optimal POVM for MED of a LI ensemble $\widetilde{P} = \{p_i , \rho_i\}_{i=1}^{m} \in \ens$, i.e., $\mathscr{P} \left( \widetilde{P} \right) $, is the PGM of another ensemble of states $\widetilde{Q} = \{q_i, \sigma_i \}_{i=1}^{m}$, where (1) $q_i \geq 0$, $\sum_{i=1}^{m}q_i=1$ and (2) $supp \left( \sigma_i \right) \subseteq supp \left( \rho_i \right)$, for all $ 1 \leq i \leq m$. If we denote $\mathscr{P} \left( \widetilde{P} \right) $ as $ \{ \Pi_i \}_{i=1}^m$, then $\Pi_i$ has the form\footnote{Note that $\left( \sum_{j=1}^{m} q_j \sigma_j \right)^{-\frac{1}{2}}$ is well defined because $\sum_{j=1}^{m} q_j \sigma_j > 0$. This is the consequence of the fact that the supports of $\rho_1, \rho_2, \cdots, \rho_m$ span $\mathcal{H}$}:

\begin{equation}
\label{formPI}
\Pi_i = \left( \sum_{j=1}^{m} q_j \sigma_j \right)^{-\frac{1}{2}} \; q_i \sigma_i \; \left( \sum_{k=1}^{m} q_k \sigma_k \right)^{-\frac{1}{2}}.
\end{equation}
 
In the LIP case, i.e., when $r_i =1, \; \forall \; 1 \leq i \leq m$, we know the following:
 
\begin{enumerate}

\item{ $q_i > 0, \; \forall \; 1 \leq i \leq m$}

\item{$supp(\rho_i) = supp(\sigma_i), \; \forall \; 1 \leq i \leq m$\footnote{Since in the LIP case, $\rho_i$ are all rank one, this means $\rho_i = \sigma_i , \; \forall \; 1 \leq i \leq m$}}

\item {The correspondence $\widetilde{P} \rightarrow \widetilde{Q}$ is a map, and it is an invertible map. An analytic expression for the inverse map, i.e. the map from $\widetilde{Q} \rightarrow \widetilde{P}$, was obtained in \cite{Bela, Mas, Carlos}.}
\end{enumerate}

We are motivated to answer the question whether these results can be extended to cases where $r_i \geq 1$? We already noted that $rank \left(\Pi_i \right)=rank \left( \rho_i \right)$, $ \forall \; 1 \leq i \leq m$. This implies that (1) $q_i >0$\footnote{Had $q_i=0$ for any $i=1,2,\cdots, m$, $ \Pi_i =0$ (see equation \eqref{formPI}). We know that this isn't true because $rank(\Pi_i)=r_i \neq 0$.} and (2) $supp \left( \sigma_i \right) $ $ =$ $ supp \left( \rho_i \right)$\footnote{Since $\sigma_i$ and $\Pi_i$ are related through a congruence transformation $\forall \; 1 \leq i \leq m$ (see equation \eqref{formPI}) it follows that $rank(\sigma_i) = rank \left( \Pi_i \right) = r_i$. Since $supp(\sigma_i)$ is a subspace of $supp(\rho_i)$ and since $rank(\rho_i)= r_i = rank(\sigma_i)$ it follows that $supp(\sigma_i) = supp(\rho_i)$, $\forall \; 1 \leq i \leq m$.}. In this paper we establish that (3) holds for general LI ensembles too, i.e., we first establish that the correspondence $\widetilde{P} \rightarrow \
widetilde{Q}$ is a mapping, then we prove that this is an invertible map and we give an analytic expression for the inverse of this map. Later on we will use the existence of this map to derive a technique to obtain the optimal POVM for a LI ensemble, in the same way as done for LI pure state ensembles in \cite{Singal}.

For this purpose defnie the PGM map from $\ens$ to $\pro$ such that $PGM \left( \widetilde{Q} \right)$ is the pretty good measurment associated with the ensemble and the PGM of the ensemble $\widetilde{Q} = \{q_i, \sigma_i\}_{i=1}^{m}$ is defined by:

\begin{equation}
\label{PGM}
PGM \left( \widetilde{Q} \right) = \{ \left(\sum_{j=1}^{m} q_j \sigma_j \right)^{-\frac{1}{2}} q_i \sigma_i \left(\sum_{k=1}^{m} q_k \sigma_k \right)^{-\frac{1}{2}} \}_{i=1}^{m}.
\end{equation}

\subsection{The $\widetilde{P} \rightarrow \widetilde{Q}$ Correspondence:}
\label{PQcorr}

Given that $\mathscr{P} \left( \widetilde{P} \right) = \{\Pi_i \}_{i=1}^{m}$, where $\{\Pi_i \}_{i=1}^{m} \in \pro$. Hence $ \Pi_i \Pi_j = \delta_{ij} \, \Pi_i, \; \forall \; 1 \leq \, i,j \, \leq m$. Consider a spectral decomposition of each $\Pi_i$ into pure states:
\begin{equation}
\label{Pidecomposition}
\Pi_i = \sum_{j=1}^{r_i} \ketbra{w_{ij}}{w_{ij}},
\end{equation}
where $\braket{w_{i_1j_1}}{w_{i_2j_2}}=\delta_{i_1i_2}\delta_{j_1j_2}$ for $ 1 \leq i_1, i_2 \leq m$ and $1 \leq j_1 \leq r_{i_1}$, $1 \leq j_2 \leq r_{i_2}$. For each $\Pi_i$ there is a $U\left(r_i\right)$ degree of freedom in choosing this spectral decomposition. For now we assume that $\{ \ketbra{w_{ij}}{w_{ij}} \}_{j=1}^{r_i}$ is any spectral decomposition of $\Pi_i$ in equation \eqref{Pidecomposition}. Later on a specific choice of the set $\{ \ket{w_{ij}} \}_{i=1, j=1}^{i=m,j=r_i}$ will be made. 

Each of the unnormalized density matrices $p_i \rho_i$ can be decomposed into a sum of $r_i$ pure states in the following way:

\begin{equation}
\label{rhodecomposition}
p_i \rho_i = \sum_{j_i=1}^{r_i} \ketbrat{\psi}{ij_i}{\psi}{ij_i}.
\end{equation} Here the vectors $\tket{\psi}{ij_i}$ are unnormalized. And the set $\{ \tket{\psi}{ij_i} \}_{j_i=1}^{r_i}$ is LI. Again there is a $U\left( r_i \right)$ degree of freedom in the choice of decomposition of the unnormalized state $p_i \rho_i$ into the vectors $\tket{\psi}{ij_i}$. We assume that some choice of such a decomposition has been made in equation \eqref{rhodecomposition} without any particular bias. Let the gram matrix corresponding to the set $\{ \tket{\psi}{ij_i}  \; | \; 1 \leq i \leq m,  \; 1 \leq j_i \leq r_i\}$ be denoted by $G$, whose matrix elements are given by the following equation: 
\begin{equation}
\label{Gram2}
G^{(l \; i)}_{k_l \; j_i}= \tbraket{\psi}{l k_l}{\psi}{i j_i} 
\end{equation} Some explanation on the indices is in order. All the $n \times n$ matrices that we deal with in this paper are divided into blocks of sizes $r_1, r_2, \cdots, r_m$. The matrix element of such an $n \times n$ matrix is given by two tiers of row indices and two tiers of column indices: the inter-block  row (or column) index and the intra-block row (or column) index. The former are represented by the superscript $(l \; i)$, where $l$ represents the row block and $i$ represents the column block in the $n \times n$ matrix, whereas the latter are represented by subscripts $k_l \; j_i$, where $k_l$ represents the $k$-th row and $j_i$ the $j$-th column of the $(l \; i)$-th matrix block of the $n \times n$ matrix. This implies that $1 \leq k_l \leq r_l$ and $1 \leq j_i \leq r_i$. At times subscripts $l$ in $k_l$ and $i$ in $j_i$ are omitted. In such situations it is clear which block the intrablock indices $k$ and $j$ are for. This notation, while at first seems cumbersome,  will come in handy later. 

For each $i=1,2,\cdots, m$, the set $\{ \tket{\psi}{ij_i} \}_{j_i=1}^{r_i}$ is LI. Since $supp \left(p_1 \rho_1 \right),$ $supp \left(p_2 \rho_2 \right),$ $\cdots$, $supp \left(p_m \rho_m \right)$ are LI, the set $\bigcup_{i=1}^{m} \{ \tket{\psi}{ij_i} \}_{j_i=1}^{r_i} $  is LI as well. This implies that $G>0$. Corresponding to the set $\bigcup_{i=1}^{m} \{ \tket{\psi}{ij_i} \}_{j_i=1}^{r_i} $ there is another set of vectors given by: $\{ \tket{u}{ij_i} \}_{i=1, j_i = 1}^{i=m, j_i = r_i}$ with the property:

\begin{equation}
\label{psiandu}
\tbraket{\psi}{i_1j_1}{u}{i_2j_2} = \delta_{i_1 i_2} \delta_{j_1 j_2}, \; \forall \; 1 \leq i_1,i_2 \leq m \text{ and } 1 \leq j_1 \leq r_{i_1}, \; 1 \leq j_2 \leq r_{i_2}. 
\end{equation} 

The vectors $\tket{u}{ij_i}$ can be expanded in the basis $\{ \tket{\psi}{ij} \}_{i=1,j_i=1}^{i=m,j_i=r_i}$ in the following way:

\begin{equation}
\label{u}
\tket{u}{ij_i} = \sum_{l=1}^{m}\sum_{l_k=1}^{r_l} \left( G^{-1} \right)^{(l \; i)}_{k_l \; j_i} \tket{\psi}{lk_l}, \; \forall \; 1 \leq i \leq m, \; 1 \leq j_i \leq m.
\end{equation}

From equation \eqref{u} it can be seen that the set $\{ \tket{u}{ij_i} \}_{i=1, \; j_i=1}^{i=m, \; j_i=r_i}$ is a LI set of $n$ vectors. Hence it forms a basis for $\mathcal{H}$. This is also corroborated by the fact that the gram matrix of the set $\{ \tket{u}{ij_i} \}_{i=1, \; j_i=1}^{i=m, \; j_i=r_i}$ is $G^{-1}$. Thus the orthonormal basis vectors $\{ \ket{ w_{ij_i} } \}_{i=1, \; j_i=1}^{i=m, \; j_i=r_i}$, given by equation \eqref{Pidecomposition}, can be expanded in terms of the $\tket{u}{ij_i}$ vectors:

\begin{equation} 
\label{wexpandu}
\ket{w_{ij_i}} = \sum_{l=1}^{m}\sum_{k_l=1}^{r_l} \left( G^\frac{1}{2} W \right)^{(l \; i)}_{k_l \; j_i} \tket{u}{lk_l}, \; \forall \; 1 \leq i \leq m, \; 1 \leq j_i \leq r_i
\end{equation} where $W$ is an $n \times n$ unitary matrix. There is a one-to-one correspondence between the unitary matrix $W$ and the choice of spectral decomposition in equation \eqref{Pidecomposition}, i.e., fixing the spectral decomposition of the projectors $\Pi_i$ in equation \eqref{Pidecomposition} fixes the unitary $W$ uniquely. This becomes clearer in the following equation:
 
Substituting equation \eqref{wexpandu} in equation \eqref{Pidecomposition} we get:
\begin{equation}
\label{mixedP}
\Pi_i = \sum_{l_1, l_2 = 1}^{m} \sum_{k_1=1}^{r_1} \sum_{k_2=1}^{r_2} \left( \sum_{j=1}^{r_i} \left(  G^{\frac{1}{2}}W \right)^{(l_1 \; i)}_{k_1  \;  j}  \left( W^\dag G^{\frac{1}{2}} \right)^{(i  \;  l_2)}_{j  \; k_2} \right) \ketbrat{u}{l_1 k_1}{u}{l_2 k_2}.
\end{equation}


Upon substituting the expression for $\Pi_i$ and $\Pi_j$ from equation \eqref{mixedP} into equation \eqref{St} we get the following:
\begin{align}
\label{St3}
\sum_{\substack{ 1 \leq l_1, \; l_2 \leq m, \\  1\leq k_1 \leq r_{l_1}, \\ 1\leq k_2 \leq r_{l_2}   }}
 \xi^{(l_1 \; l_2)}_{k_1  \; k_2}  \ketbrat{u}{l_1 k_1}{u}{l_2 k_2} \; = \; 0
\end{align} where $ \xi^{(l_1 l_2)}_{k_1 k_2}$ is given by:
\begin{align}
\label{xi}
&  \xi^{(l_1  \;  l_2)}_{k_1  \;  k_2} = \notag \\
&\sum_{s=1}^{r_i}\sum_{t=1}^{r_j}
 \left(G^{\frac{1}{2}}W\right)^{\left( l_1  \;  i \right)}_{k_1  \;  s} 
 \left(
 \sum_{h=1}^{r_i}
 \left( W^\dag G^{\frac{1}{2}} \right)^{\left(i  \;  i \right)}_{s  \;   h}
 \left( G^{\frac{1}{2}}W \right)^{\left(i   \;  j \right)}_{h  \;  t}
 -
 \sum_{g=1}^{r_j}
 \left( W^\dag G^{\frac{1}{2}} \right)^{\left(i  \;  j \right)}_{s  \;  g}
 \left( G^{\frac{1}{2}}W \right)^{\left(j  \;  j  \right)}_{g  \;  t}
 \right)
 \left( W^\dag G^{\frac{1}{2}}\right)^{\left(j  \;  l_2 \right)}_{t  \;  k_2} 
\end{align}
Equation \eqref{St3} is the stationary condition \eqref{St}. The expression for $\xi^{\left( l_1 \; l_2 \right)}_{k_1 \; k_2}$ in equation \eqref{xi} is pretty complicated. It is desired make equation \eqref{St3} more transparent. With this aim in mind we partition the matrix $G^{\frac{1}{2}} W$ into the aforementioned blocks and introduce a notation for these blocks:
\begin{enumerate}
\item \begin{equation}
\label{part1}
G^{\frac{1}{2}} W   = \begin{pmatrix}
                       X^{(11)} & X^{(12)} & \cdots & X^{(1m)}\\
                       X^{(21)} & X^{(22)} & \cdots & X^{(2m)}\\
                       \vdots & \vdots & \ddots & \vdots\\
                       X^{(m1)} & X^{(m2)} & \cdots & X^{(mm)}
                       \end{pmatrix}
\end{equation} where $X^{(l_1l_2)}$ is the $\left( l_1 l_2 \right) $-th block of dimension $r_{l_1} \times r_{l_2}$ in $G^\frac{1}{2}W$. The matrix elements of $X^{(l_1 l_2)}$ are given by $ \left( X^{(l_1 l_2)} \right)_{k_1 \; k_2} = \left( G^{\frac{1}{2} } W \right)^{\left( l_1 \; l_2 \right)}_{k_1 \; k_2}, \; \forall \, 1 \leq l_1, l_2 \leq m,$ $ \forall \, 1 \leq k_1 \leq r_{l_1}, \; 1 \leq k_2 \leq r_{l_2}$. 
\item Define: \begin{align}
 \label{column}
& C^{(i)} \equiv \begin{pmatrix}
        X^{(1i)}\\
        X^{(2i)}\\
        \vdots \\
        X^{(mi)} 
       \end{pmatrix},  \; 1 \leq i \leq m 
 \end{align} Thus $C^{(i)}$ is the $i$-th block column of $G^{\frac{1}{2}}W$.
\item Similarly, let's partition $W^\dag G^{-frac{1}{2}}$ into blocks: 
\begin{equation}
\label{part2}
W^{\dag}G^{-\frac{1}{2}} \, = \begin{pmatrix}
                                Y^{(11)} & Y^{(12)} & \cdots & Y^{(1m)}\\
                                Y^{(21)} & Y^{(22)} & \cdots & Y^{(2m)}\\
                                \vdots & \vdots & \ddots & \vdots\\
                                Y^{(m1)} & Y^{(m2)} & \cdots & Y^{(mm)}
                               \end{pmatrix} 
\end{equation} where $\left( Y^{(l_1 l_2)}\right)_{k_1 k_2} = \left( W^{\dag} G^{-\frac{1}{2}} \right)^{\left( l_1 \; l_2 \right)}_{k_1 \; k_2}, \; \forall \, 1 \leq l_1, l_2 \leq m, \; 1 \leq k_1 \leq r_{l_1}$ and $1 \leq k_2 \leq r_{l_2}$. 
\item Define:
\begin{align}
\label{row}
& R^{(i)} \equiv \begin{pmatrix}
       Y^{(i1)} & Y^{(i2)} & \cdots & Y^{(im)}
       \end{pmatrix}, \; 1 \leq i \leq m 
\end{align} Thus $R^{(i)}$ is the $i$-th block-row of $W^{\dag} G^{-\frac{1}{2}}$. 
\end{enumerate}
Substituting equations \eqref{part1} and \eqref{column} in equation \eqref{St3} we obtain condition \eqref{St} in a more transparent form:
\begin{equation}
\label{St5}
C^{(i)} \left(  {X^{(ii)}}^{\dag} X^{(ij)} - {X^{(ji)}}^{\dag} X^{(jj)} \right) {C^{(j)}}^{\dag}\, = \,0, \quad \forall \, 1 \leq i,j \leq m
\end{equation} where $ {X^{ji}}^{\dag}$ is the $(ij)$-th block of $W^\dag G^{\frac{1}{2}}$.
From the definition of equations \eqref{column} and \eqref{row}, $R^{(i)} C^{(i)} = \mathbb{1}_{r_i}, \quad \forall \, 1 \leq i \leq m$ where $\mathbb{1}_{r_i}$ is the identity matrix of dimension $r_i$. Left and right multiplying the LHS and RHS of equation \eqref{St5} by $R^{(i)}$ and ${R^{(j)}}^\dag$ respectively gives: 
\begin{eqnarray}
& R^{(i)} \, C^{(i)} \left(  {X^{(ii)}}^{\dag} X^{(ij)} - {X^{(ji)}}^{\dag} X^{(jj)} \right) {C^{(j)}}^{\dag} \, {R^{(j)}}^{\dag} & = \,0 \notag \\
\label{St5'}
\Longrightarrow & {X^{(ii)}}^{\dag} X^{(ij)} - {X^{(ji)}}^{\dag} X^{(jj)} & =0, \; \forall \; 1 \leq i,j \leq m.
\end{eqnarray} Let $U_D$ be a block diagonal unitary matrix given in the following equation:
\begin{equation}
\label{UD}
U_D = \begin{pmatrix}
U^{(1)} & 0 & \cdots & 0 \\
0 & U^{(2)} & \cdots & 0 \\
\vdots & \vdots & \ddots & \vdots\\
0 & 0 & \cdots & U^{(m)}
\end{pmatrix} 
\end{equation} where $U^{(i)}$ is an $r_i \times r_i$ unitary matrix for $i=1,2,\cdots,m$. We remarked earlier that there is a $U(r_i)$ degree of freedom in choice of resolution of spectral decomposition of $\Pi_i$ in equation \eqref{Pidecomposition}. What that means is that $\Pi_i$ is invariant under the transformation: $\ket{w_{ij}} \rightarrow \ket{w_{ij}'} = \sum_{k=1}^{r_i} U^{(i)}_{k j} \ket{w_{ik}} =\sum_{l=1}^{m}\sum_{k=1}^{r_l} \left(U_D \right)^{\left( l \; i \right)}_{k \; j} \ket{w_{lk}} $, where $1 \leq i \leq m, 1 \leq j \leq r_i$. Expanding the vectors $\ket{w'_{ij}}$ in the basis $\{ \tket{u}{ij_i} \}_{i=1, j_i=1}^{r=m,j_i=r_i}$ gives:

\begin{equation}
\label{wij'}
\ket{w_{ij}'} = \sum_{l=1}^{m}\sum_{k=1}^{r_l} \left( G^{\frac{1}{2}}WU_D \right)^{(l \; i)}_{k \; j}  \tket{u}{lk}.
\end{equation}

It is readily seen that this will leave $\Pi_i$ invariant in equation \eqref{mixedP}. Here we make a specific choice of $U_D$, which is so that the diagonal blocks of $G^{\frac{1}{2}}WU_D$ are positive semidefinite, i.e., $X^{(ii)}U^{(i)} \geq 0, \; \forall \, 1 \leq i \leq m $\footnote{ Given some arbitrary choice of spectral decomposition for $ \Pi_1, \Pi_2, \cdots, \Pi_m $ and the fixed unitary $W$ that corresponds to these spectral decompositions, choose $ U^{(1)},U^{(2)},\cdots,U^{(m)}$ such that $X^{(ii)}U^{(i)} \geq 0, \quad \forall \, 1 \leq i \leq m $. It is always possible to find some $U^{(i)}$ such that $X^{(ii)}U^{(i)} \geq 0$ using singular value decomposition of $X^{(ii)}$. Moreover once the non-singularity of the $X^{(ii)}$ matrices has been established (proved in theorem \eqref{Xii}), the unitaries $ U^{(1)},U^{(2)},\cdots,U^{(m)}$ are unique for a given the spectral decompositions of $\Pi_i$'s (and the associated $W$).}. From here onwards we assume that $U_D$ is absorbed within $W$, i.e., 
$WU_D \rightarrow W$, $X^{(ij)} U^{(j)} \rightarrow X^{(ij)}$ and $\ket{w'_{i j_i}} \longrightarrow \ket{w_{i j_i}}$. This establishes that for any given decomposition of the unnormalized states $p_i \rho_i$ into pure unnormalized states $\tket{\psi}{i j_i}$, as in equation \eqref{rhodecomposition}, there is a unique unitary $W$ such that (1) the ONB $\{ \ket{w_{i j_i}} \}_{i=1, \; j_i=1}^{i=m, \; j_i=r_i}$, defined by equation \eqref{wexpandu}, corresponds to the optimal POVM, in equation \eqref{Pidecomposition} and (2) the matrix $G^\frac{1}{2}W$, which occurs in the equation \eqref{wexpandu}, has positive semi-definite block diagonal matrices\footnote{As mentioned in the footnote above, it is only when we prove that $X^{(ii)}$'s are non-singular, that it will be clear that there exists a unique $U^{(i)}$ such that $X^{(ii)}U^{(i)}>0$. And only then will it be clear that $W \longrightarrow W U_D$ is unique. As it stands now, the non-singularity of the $X^{(ii)}$'s still remains to be proved.} ( i.e., $X^{\left( i i \right)} \geq 0, \; \forall \
; 1 \leq i \leq m$). This point should be kept in mind since it will be crucial later. Thus equation \eqref{St5'} becomes:
\begin{equation}
\label{St6}
X^{(ii)} X^{(ij)} - {X^{(ji)}}^{\dag} X^{(jj)}  =0, \; \forall \; 1 \leq i,j \leq m
\end{equation} 
Define $D$ as the block diagonal matrix containing diagonal blocks of $G^{\frac{1}{2}}W$:
\begin{equation}
\label{DX}
D \equiv \begin{pmatrix}
            {X^{(11)}} & 0 &  \cdots & 0 \\
            0 & {X^{(22)}} &  \cdots & 0\\
            \vdots & \vdots & \ddots & \vdots \\
            0 & 0 & \cdots & {X^{(mm)}}
           \end{pmatrix}
\end{equation}
Left multiplying $G^{\frac{1}{2}}W$ by $D$ gives:
\begin{equation}
\label{DXG}
D G^{\frac{1}{2}}W = \begin{pmatrix}
                                    (X^{(11)})^2 &  {X^{(11)}} X^{(12)} & \cdots &  {X^{(11)}} X^{(1m)} \\
                                    {X^{(22)}} X^{(21)} & (X^{(22)})^2 & \cdots &   {X^{(22)}} X^{(2m)}\\
                                    \vdots & \vdots & \ddots & \vdots \\
                                    {X^{(mm)}} X^{(m1)} & {X^{(mm)}} X^{(m2)} & \cdots & (X^{(mm)})^2
                                   \end{pmatrix}
\end{equation} 
Equation \eqref{St6} tells us that $D G^{\frac{1}{2}}W$ is a hermitian matrix. From that we get:
\begin{equation}
\label{Ainv}
 \left( DG^\frac{1}{2}W  \right)^2 = \left( DG^\frac{1}{2}W  \right) \; \left( W^\dag G^\frac{1}{2} D \right)  = DGD
\end{equation}

Thus condition \eqref{St} implies that one needs to find a block diagonal matrix, $D=$ $ Diag ($ $X^{(11)},$ $X^{(22)}$ $,\cdots,$ $X^{(mm)} ) $ $\geq0$ where $X^{(ii)}$ is an $r_i \times r_i$ positive semidefinite matrix, so that the diagonal blocks of \emph{a} hermitian square root of the matrix $DGD$ are given by $\left( X^{(11)} \right)^2, \left( X^{(22)} \right)^2, \cdots,\left( X^{(mm)} \right)^2$ respectively. Here $G$ corresponds to the gram matrix of vectors $\{ \tket{\psi}{ij_i} \; | \; 1 \leq i \leq m, \; 1 \leq j_i \leq r_i\}$ where $p_i\rho_i = \sum_{j_i=1}^{r_i} \ketbrat{\psi}{ij_i}{\psi}{ij_i}$, for all $i=1,2,\cdots,m$. This is a rotationally invariant condition, i.e., these optimality conditions enable us to get the optimal POVM for any ensemble of the form $U \widetilde{P} U^\dag$, where $U \in U(n)$. 

Condition \eqref{St} is only one of the necessary and sufficient conditions that the optimal POVM needs to satisfy. The other condition is given by condition \eqref{Glb}. We will prove that both conditions can be subsumed in the statement that $D G^\frac{1}{2} W >0$. We can already see that condition \eqref{St} is contained in the statement $DG^\frac{1}{2}W>0$ because positivity of a matrix subsumes hermiticity as well. But to establish the positivity we first need to prove that $D G^\frac{1}{2} W$ is non-singular for which we only need to establish that $D$ is non-singular (since $G^\frac{1}{2} >0$ and $W$ is unitary, $G^\frac{1}{2} W$ is non-singular). To prove that $D$ is non-singular is equivalent to proving that $X^{(ii)}$ are non-singular, i.e., $X^{(ii)}$ is of rank $r_i$ for all $ 1 \leq i \leq m$.

\begin{theorem}
\label{Xii}
$X^{(ii)}$ is of rank $r_i$, $\forall \; 1 \leq i \leq m$.
\end{theorem}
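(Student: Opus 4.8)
The plan is to read off the entries of $X^{(ii)}$ as overlaps and then to show that a null vector of $X^{(ii)}$ would force a nonzero vector into the kernel of the dual operator $Z$ of equation \eqref{Z}, which I will argue is strictly positive. First I would substitute the expansion \eqref{wexpandu} of the orthonormal vectors $\ket{w_{ij}}$ into the overlap $\braket{\widetilde{\psi}_{ik}}{w_{ij}}$. Since the families $\{\tket{\psi}{ik}\}$ and $\{\tket{u}{lk}\}$ are biorthogonal by \eqref{psiandu}, every term with outer block $l\neq i$ or inner index different from $k$ drops out, leaving
\begin{equation}
\braket{\widetilde{\psi}_{ik}}{w_{ij}} = \left( G^{\frac{1}{2}}W \right)^{(ii)}_{kj} = \left( X^{(ii)} \right)_{kj}.
\end{equation}
Thus $X^{(ii)}$ is precisely the cross-Gram matrix between the basis $\{\tket{\psi}{ik}\}_{k=1}^{r_i}$ of $supp(\rho_i)$ (spanning it by \eqref{rhodecomposition}) and the orthonormal basis $\{\ket{w_{ij}}\}_{j=1}^{r_i}$ of the range of $\Pi_i$. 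In these terms the claim $rank(X^{(ii)})=r_i$ is equivalent to the geometric statement that no nonzero vector in the range of $\Pi_i$ is orthogonal to $supp(\rho_i)$.

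Next I would argue by contradiction. Suppose $X^{(ii)}$ is singular, so there is a nonzero vector $c$ with components $c_1,\dots,c_{r_i}$ satisfying $X^{(ii)}c=0$. Set $\ket{\phi}=\sum_{j=1}^{r_i}c_j\ket{w_{ij}}$. Orthonormality of the $\ket{w_{ij}}$ from equation \eqref{Pidecomposition} makes $\ket{\phi}$ nonzero, and by construction $\Pi_i\ket{\phi}=\ket{\phi}$. The overlap identity above gives $\braket{\widetilde{\psi}_{ik}}{\phi}=(X^{(ii)}c)_k=0$ for every $k$, so $\ket{\phi}\perp supp(\rho_i)$, i.e. $\rho_i\ket{\phi}=0$. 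Feeding these into the complementarity slackness condition \eqref{cslack} written as $Z\Pi_i=p_i\rho_i\Pi_i$ yields $Z\ket{\phi}=Z\Pi_i\ket{\phi}=p_i\rho_i\Pi_i\ket{\phi}=p_i\rho_i\ket{\phi}=0$.

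It then remains to show that $Z$ has trivial kernel. By the global optimality condition \eqref{Glb}, $Z-p_k\rho_k\geq 0$ for all $k$, hence $Z\geq 0$; if $Z\ket{\chi}=0$ then $\langle\chi|Z|\chi\rangle=0$ forces $\langle\chi|p_k\rho_k|\chi\rangle=0$ and therefore $\rho_k\ket{\chi}=0$ for every $k$. Since the supports of $\rho_1,\dots,\rho_m$ span $\mathcal{H}$, this gives $\ket{\chi}=0$, so $Z>0$. Applied to the vector $\ket{\phi}$ found above, $Z\ket{\phi}=0$ is then the desired contradiction, and we conclude $rank(X^{(ii)})=r_i$ for every $1\leq i\leq m$.

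I expect the only substantive step to be the strict positivity of $Z$ in the last paragraph; the overlap identity and the null-vector manipulation are essentially bookkeeping with the block indices and the biorthogonality relation \eqref{psiandu}. One point worth flagging is that the argument genuinely uses \emph{optimality} and not merely stationarity: it is the global condition \eqref{Glb} (through $Z\geq p_k\rho_k$) that excludes the ``misassigned'' stationary measurements for which some $X^{(ii)}$ could degenerate, so the proof cannot be run from condition \eqref{St} alone.
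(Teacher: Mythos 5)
Your proof is correct and follows essentially the same route as the paper's: both arguments reduce a rank deficiency of $X^{(ii)}$ to the existence of a nonzero vector in $supp(\Pi_i)$ orthogonal to $supp(\rho_i)$, which via the slackness condition becomes a kernel vector of $Z$, contradicting the strict positivity $Z>0$ that follows from the global condition \eqref{Glb}. The only cosmetic difference is that the paper routes the argument through the operator $p_i^2\rho_i\Pi_i\rho_i$, whose matrix in the $\tket{\psi}{ij}$ basis is $\left(X^{(ii)}\right)^2$, whereas you read $X^{(ii)}$ off directly as the cross-Gram matrix $\braket{\widetilde{\psi}_{ik}}{w_{ij}}$.
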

\begin{proof}
Using equations \eqref{rhodecomposition}, \eqref{mixedP} and \eqref{part1}, the operator $p_i^2 \rho_i \Pi_i \rho_i$ can be expanded in the following operator basis, $\{ \ketbrat{\psi}{ij}{\psi}{lk}\; | \; 1 \leq i, \; l \leq m; \; 1 \leq j \leq r_i, \; 1 \leq k \leq r_l \}$. This gives:
\begin{align}
\label{ka}
  p_i^2 \rho_i \Pi_i \rho_i  =  \sum_{\substack{j, \; k=1}}^{r_i}  \left( { \left( X^{\left( ii \right) }\right)}^2 \right)_{jk} \ketbrat{\psi}{ij}{\psi}{ik}.
\end{align}
Now we know that $rank \left( \Pi_i \right) = r_i, \; \forall \; 1 \leq i \leq m$. So $rank \left( p_i^2 \rho_i \Pi_i \rho_i \right), rank \left( p_i \rho_i \Pi_i \right) ( = rank\left( p_i \Pi_i \rho_i  \right) ) $ $\leq r_i,  \; \forall \; 1 \leq i \leq m$. We first establish that $rank \left( p_i \rho_i \Pi_i \right) =  rank \left( p_i \Pi_i \rho_i \right) = r_i$. Suppose not, i.e., let $rank \left( p_k \rho_k \Pi_k \right) < r_i$. This implies that $\exists \; \ket{v} \in supp \left( \Pi_k \right)- \{0\}$ $\ni \; p_k \rho_k \Pi_k \ket{v} =0$. But since $\Pi_j \ket{v} = 0$ when $j \neq k$ \footnote{ $\ket{v} \in supp{ \left( \Pi_i \right)}$ and $\Pi_i \Pi_j = \Pi_i \delta_{ij}, \; \forall \; 1 \leq i,j \leq m$ implies that $\ket{v} \notin supp \left( \Pi_j \right)$.}, we get that $Z \ket{v} = \sum_{i=1}^{m} p_i \rho_i \Pi_i \ket{v} = 0$ using equation \eqref{Z}. This in turn implies that $Z$ cannot be non-singular. But the optimality condition \eqref{Glb} demands that $Z > 0$. Hence the assumption that 
$rank \left( p_i \rho_i \Pi_i \right) < r_i$ isn't true for any $1 \leq i \leq m$. This implies that  $rank \left( p_i \rho_i \Pi_i \right) = r_i$, $\forall \; 1 \leq i \leq m$.
 
That $rank \left( p_i \rho_i \Pi_i \right)$ $ = rank \left( p_i \Pi_i \rho_i  \right)$ $ = rank \left( \Pi_i \right)$ $ = rank \left( \rho_i \right)$ $ = r_i$ implies that any non-zero vector belonging to $supp \left(\Pi_i \right)$ has a non-zero component in $supp \left( \rho_i \right)$ and vice versa for all $1 \leq i \leq m$. 

This tells us that $\rho_i \ket{v} \neq 0 \Rightarrow p_i^2 \rho_i \Pi_i \rho_i \ket{v} \neq 0, \; \forall \; \ket{v} \in \mathcal{H}$, i.e., $supp \left( \rho_i \right) \subseteq supp \left( p_i^2 \rho_i \Pi_i \rho_i \right)$. We already know that  $supp \left( p_i^2 \rho_i \Pi_i \rho_i \right) \subseteq supp\left( \rho_i \right)$. This implies $supp \left( p_i^2 \rho_i \Pi_i \rho_i \right) = supp \left( \rho_i \right)$ which, in turn, implies that $rank \left( p_i^2 \rho_i \Pi_i \rho_i \right) = r_i$, $\forall \; 1 \leq i \leq m$. Using equation \eqref{ka}, this implies that $\left(X^{(ii)}\right)^2$ is of rank $r_i$ and that implies that $X^{(ii)}$ is of rank $r_i$ for all $1 \leq i \leq m$.   
\end{proof}
Theorem \eqref{Xii} implies that $D>0$. And this in turn implies that $D G^\frac{1}{2} W$ is non-singular. We want to now show that the necessary and sufficient optimality conditions given by equation \eqref{cslack} (or equivalently, \eqref{St}) and the inequality \eqref{Glb} are equivalent to the statement that $D G^\frac{1}{2} W >0 $, where $DG^\frac{1}{2}W$ is the matrix occuring in equation \eqref{DXG}. To show that we first need to simplify the optimal POVM conditions for linearly independent states. Let us define a new set of vectors $\{ \tket{\chi}{ij_i} \; | \; 1 \leq i \leq m, \; 1 \leq j_i \leq r_i \}$.

\begin{equation}
\label{chi}
\tket{\chi}{ij_i} = \sum_{\substack{k_i=1}}^{r_i}  X^{(ii)}_{k_ij_i} \tket{\psi}{ik_i}, \; \forall \; 1 \leq i \leq m, \; 1 \leq j_i \leq r_i.
\end{equation}

Since $rank \left( X^{(ii)} \right) = r_i$,  $\left\{ \tket{\chi}{ij} \right\}_{j=1}^{r_i}$ is a basis for $Supp(p_i\rho_i)$. And $\{ \tket{\chi}{ij_i} \}_{i=1, j_i=1}^{i=m, j_i = r_i}$ is a basis for $\mathcal{H}$.

Now the inner product of any two vectors from the set $\{ \tket{\chi}{i j_i} \}_{i=1, \; j_i = 1}^{i=m, \; j_i=r_i}$ is given by:

\begin{equation}
\label{innerchi}
\tbraket{\chi}{i_1 j_1}{\chi}{i_2 j_2} = \left( DGD \right)^{ \left( i_1 \; i_2 \right) }_{j_1 \; j_2}, \; \forall \; 1 \leq i_1, i_2 \leq m, \; 1 \leq j_1 \leq r_{i_1}, \; 1 \leq j_2 \leq r_{i_2}
\end{equation}

This shows us that the gram matrix of the set of vectors $\{ \tket{\chi}{i j_i} \}_{i=1, \; j_i = 1}^{i=1, \; j_i=1r_i}$ is the matrix $DGD$. 

Using this basis we simplify the necessary and sufficient conditions for the optimal POVM for MED of linearly independent states.
\begin{theorem}
\label{necsufcond}
In the problem of MED of a LI ensemble $\{p_i, \rho_i \}_{i=1}^{m}$ if a POVM, represented as $\{\Pi_i\}_{i=1}^{m}$, satisfies the following two conditions then it is the optimal POVM for MED of the said ensemble: 

\begin{enumerate}

\item  $\Pi_i \left( p_i \rho_i - p_j \rho_j \right) \Pi_j = 0, \; \forall \, 1 \leq i, \; j \leq m.$ This is equivalently expressed as: $ \left( Z - p_i \rho_i \right) \Pi_i = 0, \; \forall \; 1 \leq i \leq m$,
\label{ZGREATER0}
\item  $Z > 0$,
\end{enumerate}
where $Z$ is defined as in \eqref{Z}.
\end{theorem}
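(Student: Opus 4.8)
The plan is to reduce optimality to verifying the global-maximum inequality \eqref{Glb}, $Z \ge p_i\rho_i$ for all $i$, since condition 1 is exactly the complementary-slackness/stationarity condition \eqref{cslack} (equivalently \eqref{St}), and the end of Section \ref{OPTC} records that \eqref{cslack} together with \eqref{Glb} is necessary and sufficient for optimality. The role of condition 2 is to make this reduction tractable: $Z>0$ guarantees that the two expressions for $Z$ in \eqref{Z} coincide (so $Z$ is Hermitian) and lets me form $Z^{-1/2}$ and pass to the Hermitian positive semi-definite operator $A_i := Z^{-1/2}\,p_i\rho_i\,Z^{-1/2}$. The inequality $p_i\rho_i \le Z$ is then equivalent to $A_i \le \mathbb{1}$, so it suffices to show each $A_i$ is an orthogonal projector.

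By the equivalent form of condition 1 given in the statement, $(Z - p_i\rho_i)\Pi_i = 0$, i.e.\ $Z\Pi_i = p_i\rho_i\Pi_i$. For an arbitrary $\ket{v}$ put $\ket{w} := Z^{1/2}\Pi_i\ket{v}$; then $A_i\ket{w} = Z^{-1/2}p_i\rho_i\Pi_i\ket{v} = Z^{-1/2}Z\Pi_i\ket{v} = \ket{w}$, so $A_i$ fixes pointwise the subspace $Z^{1/2}(\mathrm{range}\,\Pi_i)$, whose dimension equals $\mathrm{rank}(\Pi_i)$ because $Z^{1/2}$ is invertible. Hence the eigenvalue-$1$ eigenspace of $A_i$ has dimension at least $\mathrm{rank}(\Pi_i)$. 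On the other hand $A_i$ is obtained from $p_i\rho_i$ by a congruence with the invertible operator $Z^{-1/2}$, so $\mathrm{rank}(A_i) = \mathrm{rank}(\rho_i) = r_i$; in particular $A_i$ has at most $r_i$ nonzero eigenvalues, which forces $\mathrm{rank}(\Pi_i) \le r_i$ for every $i$.

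Next I would close the argument with a global rank count. Subadditivity of rank and $\sum_i\Pi_i = \mathbb{1}$ give $n = \mathrm{rank}(\mathbb{1}) \le \sum_i \mathrm{rank}(\Pi_i) \le \sum_i r_i = n$, so all inequalities are equalities and $\mathrm{rank}(\Pi_i) = r_i$ for each $i$ (reproving, en route and without appeal to \cite{Yohina}, that the POVM is projective with the correct block ranks). Feeding this back, the eigenvalue-$1$ eigenspace of $A_i$ has dimension at least $r_i$ while $A_i$ has exactly $r_i$ nonzero eigenvalues; therefore every nonzero eigenvalue of $A_i$ is $1$. A Hermitian operator with spectrum in $\{0,1\}$ is an orthogonal projector, so $A_i \le \mathbb{1}$, i.e.\ $p_i\rho_i \le Z$, which is \eqref{Glb}. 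Together with condition 1 this supplies both necessary-and-sufficient conditions, so $\{\Pi_i\}$ is optimal.

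I expect the one genuinely load-bearing step to be the \emph{rank bookkeeping}: one must check that $A_i$ fixes a subspace of dimension exactly $\mathrm{rank}(\Pi_i)$ (not merely some eigenvalue-$1$ eigenvectors), and then see that this local bound interacts with $\mathrm{rank}(A_i)=r_i$ and the global count $\sum_i\mathrm{rank}(\Pi_i)=n$ so as to pin each local rank to $r_i$ and hence each nonzero eigenvalue of $A_i$ to $1$. The conceptual upshot is that, for linearly independent ensembles, $Z>0$ upgrades the purely stationary condition 1 into full optimality by forcing every $A_i$ to be a projector, so that \eqref{Glb} holds automatically rather than as an independent hypothesis.
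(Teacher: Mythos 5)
Your proof is correct, and it takes a genuinely different route from the paper's. The paper proves the implication by expanding $Z$ and $p_i\rho_i$ in the operator basis built from the vectors $\tket{\chi}{ij_i}$, reducing $Z \geq p_i\rho_i$ to the block-matrix inequality $\left( DG^{\frac{1}{2}}W \right)^{-1} \geq \mathrm{diag}\left(0,\cdots,(X^{(ii)})^{-2},\cdots,0\right)$, and verifying that inequality with the Schur-complement formula for the inverse of a positive block matrix; it also needs theorem \eqref{Xii} to know that $D$ is invertible before any of this makes sense. Your argument is coordinate-free and more elementary: conjugating by $Z^{-1/2}$ turns the slackness condition into the statement that $A_i = Z^{-1/2}p_i\rho_i Z^{-1/2}$ fixes the $\mathrm{rank}(\Pi_i)$-dimensional subspace $Z^{1/2}(\mathrm{range}\,\Pi_i)$, and the squeeze $n \leq \sum_i \mathrm{rank}(\Pi_i) \leq \sum_i r_i = n$ then forces each $A_i$ to be a projector, giving $A_i \leq \mathbb{1}$ and hence \eqref{Glb}. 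This buys brevity, avoids the entire Gram-matrix apparatus, and derives $\mathrm{rank}(\Pi_i)=r_i$ on the spot rather than importing it from \cite{Yohina}; the paper's heavier computation, on the other hand, sets up the objects $D$, $G^{\frac{1}{2}}W$ and $\tket{\chi}{ij_i}$ that are reused for the map $\mathscr{R}$, its inverse, and the numerical scheme, so its effort is not wasted. Two small points: the Hermiticity of $Z$ follows from summing condition 1 over $i$ and $j$ (equating $\sum_i p_i\rho_i\Pi_i$ with $\sum_i \Pi_i p_i\rho_i$), not from condition 2, though $Z>0$ presupposes it in any case; and your parenthetical claim that the POVM is projective needs one extra line (e.g., $n = \mathrm{Tr}\,\mathbb{1} = \sum_i \mathrm{Tr}\,\Pi_i \leq \sum_i \mathrm{rank}\,\Pi_i = n$ using $\Pi_i \leq \mathbb{1}$), but nothing in your main argument depends on it.
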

\begin{proof}
We need to prove that once we find $\{ \Pi_i \}_{i=1}^m$ which satisfies condition 1. and 2., i.e.,  such that conditions \eqref{cslack} (or equivalently equation \eqref{St}) and \eqref{Glb}, then that implies that $ \sum_{i=1}^{m} p_i \rho_i \Pi_i \geq p_i \rho_i, \; \forall \; 1 \leq i \leq m$. Suppose that 1. has been satisfied. This implies that we found a block diagonal matrix $D\geq0$ (given by equation \eqref{DX}) such that the block-diagonal of \emph{a} hermitian square root of $DGD$ (equation \eqref{DXG}) is $D^2$. The $i$-th block in this block-diagonal matrix $D$ is a positive semi-definite $r_i \times r_i$ matrix denoted by $X^{(ii)}$. Additionally, theorem \eqref{Xii} tells us that the non-singularity of $Z$ implies that $D$ has to be non-singular, i.e., $Det(Z) \neq 0 \Rightarrow Det(D) \neq 0 $. This is equivalent to the statement that $X^{(ii)}$ is of rank $r_i$, i.e., $X^{(ii)}>0, \; \forall \; 1 \leq i \leq m $. Using $X^{(ii)}$ define a new set of vectors as given in 
equations \eqref{chi}. Let's expand $Z$ and $p_i\rho_i$ in the operator basis 
$ \{ \ketbrat{\chi}{i_1j_1}{\chi}{i_2j_2} \; | \; 1 \leq i_1, i_2 \leq m, \; 1 \leq j_1 \leq r_{i_1} \text{ and }  1 \leq j_2 \leq r_{i_2}  \}$:
\begin{align}
\label{Zchi}
Z & = \sum_{i_1, \; i_2 =1}^{m} \sum_{j_1=1}^{r_{i_1}}\sum_{j_2=1}^{r_{i_2}} \left( W^\dag G^{-\frac{1}{2}} D^{-1} \right)^{(i_1 \; i_2)}_{j_1 \; j_2}\ketbrat{\chi}{i_1j_1}{\chi}{i_2j_2}\\
p_i\rho_i & = \sum_{\substack{k,l=1}}^{r_i} ({X^{(ii)}}^{-2})_{kl} \ketbrat{\chi}{ik}{\chi}{il} 
\end{align}

Thus $Z> 0 \Leftrightarrow W^\dag G^{-\frac{1}{2}}D^{-1} > 0 \Leftrightarrow DG^\frac{1}{2}W >0$.

Thus proving : $Z>0 \Rightarrow  Z \geq p_i \rho_i, \; \forall \; 1 \leq i \leq m$  is equivalent to proving $ W^\dag G^{-\frac{1}{2}} D^{-1} >0$ $ \Rightarrow W^\dag G^{-\frac{1}{2}} D^{-1} \geq $ $  \left( X^{(ii)} \right)^{-2}$, $\forall \; 1 \leq i \leq m$. Since $W^\dag G^{-\frac{1}{2}} D^{-1}$ $=$ $\left( D G^\frac{1}{2} W \right)^{-1}$, our objective is to prove that given $ \left( D G^\frac{1}{2} W \right)^{-1}  >0$ (where $D G^\frac{1}{2} W$ is given by equation \eqref{DXG}) implies that \footnotesize:
\begin{eqnarray} 
\label{invdepict1}
& \begin{pmatrix}
(X^{(11)})^2  & \cdots &  {X^{(11)}} X^{(1i)} & \cdots &  {X^{(11)}} X^{(1m)} \\
\vdots  & \ddots & \vdots & \ddots & \vdots \\
{X^{(ii)}} X^{(i1)}  & \cdots &  (X^{(ii)})^2 & \cdots &   {X^{(ii)}} X^{(im)}\\  
\vdots &  \ddots & \vdots & \ddots & \vdots \\
{X^{(mm)}} X^{(m1)}  & \cdots  &   {X^{(mm)}} X^{(mi)} & \cdots & (X^{(mm)})^2
\end{pmatrix}^{-1} \geq \begin{pmatrix}
0 & \cdots & 0 & \cdots & 0\\
\vdots  & \ddots & \vdots & \ddots & \vdots \\
0 & \cdots &  (X^{(ii)})^{-2} & \cdots & 0\\ 
\vdots  & \ddots & \vdots & \ddots & \vdots \\
0 & \cdots & 0 & \cdots & 0\\ 
\end{pmatrix}, \; \forall \; 1 \leq i \leq m
&  \notag \\
& \Bigg( \text{Permute: } \left\{ \begin{array} {l l l}
k \rightarrow & m+k-(i-1), &  \forall \; 1 \leq k \leq i-1 \notag \\
k \rightarrow & k-(i-1),   & \forall \; i \leq k \leq m \notag 
\end{array} 
\right. \Bigg)
\notag \\ 
&  \notag \\
\Longleftrightarrow & \begin{pmatrix}
(X^{(ii)})^2  &  X^{(ii)} X^{\scriptscriptstyle(i \: i+1)}  & \cdots &  X^{(ii)} X^{\scriptscriptstyle(i \: i-1)}\\
X^{\scriptscriptstyle( i\!{\scriptscriptstyle +}\!1 \: i\!{\scriptscriptstyle +}\!1)} X^{\scriptscriptstyle(i+1 \: i)}  &  {X^{\scriptscriptstyle( i+1 \: i+1)}}^{2}  &\cdots &   X^{\scriptscriptstyle(i+1 \: i+1)} X^{\scriptscriptstyle(i+1 \: i-1)}\\
\vdots& \vdots & \ddots & \vdots \\
X^{\scriptscriptstyle(i-1 \: i-1)} X^{\scriptscriptstyle(i-1 \: i)}& X^{\scriptscriptstyle( i-1 \: i-1)} X^{\scriptscriptstyle(i-1 \: i+1)}  & \cdots & (X^{\scriptscriptstyle(i-1 \: i-1)})^2
\end{pmatrix}^{-1}  \! {\scriptstyle \geq } \begin{pmatrix}
(X^{\scriptscriptstyle(ii)})^{\scriptscriptstyle{-2}}& 0 & \cdots &   0\\
0 &  0 & \cdots &  0\\ 
\vdots & \vdots & \ddots & \vdots \\
0 &  0 & \cdots &  0  
\end{pmatrix}, \; \forall \; i.
\end{eqnarray}
\normalsize
Define:\begin{align}
\left(
\begin{array}{c|c}
A       & \qquad B  \qquad  \qquad\\ \hline
~        & \qquad ~  \qquad  \qquad\\
B^{\dag} & \qquad C  \qquad  \qquad\\
~        & \qquad ~  \qquad  \qquad \end{array}
\right)
\equiv
\left(
\begin{array}{c|ccc}
(X^{(ii)})^2  &  X^{(ii)} X^{(i \: i+1)}  & \cdots &  X^{(ii)} X^{(i \: i-1)}\\ \hline
X^{( i+1 \: i+1)} X^{(i+1 \: i)}   &  {X^{( i+1 \: i+1)}}^{2}  &\cdots &   X^{(i+1 \: i+1)} X^{(i+1 \: i-1)}\\
\vdots  & \vdots & \ddots & \vdots \\
X^{(i-1 \: i-1)} X^{(i-1 \: i)}  & X^{( i-1 \: i-1)} X^{(i-1 \: i+1)}  & \cdots & (X^{(i-1 \: i-1)})^2
\end{array}
\right)
\end{align}
Hence our objective to prove that:
\begin{eqnarray}
\label{invdepict2}
\begin{pmatrix}
 A & B \\
B^{\dag} & C 
\end{pmatrix}^{-1}  > 0 \Longrightarrow
\begin{pmatrix}
 A & B \\
B^{\dag} & C 
\end{pmatrix}^{-1} \geq 
\begin{pmatrix}
 A^{-1} & 0 \\
0 & 0 
\end{pmatrix}
\end{eqnarray}
Given that $\bigl(\begin{smallmatrix}
A&B\\ B^{\dag}&C
\end{smallmatrix} \bigr) > 0$ its inverse is given by \cite{Boyd}: 
\begin{align}
 \begin{pmatrix}
  A & B \\
B^{\dag} & C 
 \end{pmatrix}
^{-1} & = 
\begin{pmatrix}
A^{-1} +  Q S_AQ^{\dag} & -Q S_A \\
-S_A Q^{\dag} & S_A 
\end{pmatrix}\\
& =  \begin{pmatrix}
   A^{-1} & 0 \\
0 & 0
  \end{pmatrix} + 
\begin{pmatrix}
 Q S_AQ^{\dag} & -Q S_A \\
-S_A Q^{\dag} & S_A 
\end{pmatrix}
\end{align}
where $S_A \equiv (C-B^{\dag}A^{-1}B)^{-1} > 0$ is the inverse of the Schur complement of $A$ in  $\bigl(\begin{smallmatrix} A&B\\ B^{\dag}&C \end{smallmatrix} \bigr)$ and $Q \equiv A^{-1}B$ \cite{Boyd}. Hence the inequality \eqref{invdepict2} amounts to proving the following:
\begin{eqnarray}
\label{invdepict3}
\begin{pmatrix}
 Q S_AQ^{\dag} & -Q S_A \\
-S_A Q^{\dag} & S_A 
\end{pmatrix} \geq 0
\end{eqnarray}
As shown in \cite{Boyd}, if $S_A >0 $, then: $\bigl(\begin{smallmatrix} Q S_AQ^{\dag} & -Q S_A \\ -S_A Q^{\dag} & S_A \end{smallmatrix} \bigr) \geq 0 \Leftrightarrow$ Schur complement of $S_A$ in $\bigl(\begin{smallmatrix} Q S_AQ^{\dag} & -Q S_A \\ -S_A Q^{\dag} & S_A \end{smallmatrix} \bigr) \geq 0$. Now $\bigl(\begin{smallmatrix}
A&B\\ B^{\dag}&C
\end{smallmatrix} \bigr) > 0 \Longrightarrow S_A > 0$. The Schur complement of $S_A$ in $\bigl(\begin{smallmatrix} Q S_AQ^{\dag} & -Q S_A \\ -S_A Q^{\dag} & S_A \end{smallmatrix} \bigr)$ is equal to $0$.  This implies that $\bigl(\begin{smallmatrix} Q S_AQ^{\dag} & -Q S_A \\ -S_A Q^{\dag} & S_A \end{smallmatrix} \bigr) \geq 0$. Hence the inequality \eqref{invdepict3} is true. This proves condition 1. of the theorem (or equivalently condition \eqref{St}) and $Z>0$ subsumes condition given by \eqref{Glb}. This proves the theorem. 
\end{proof}
Hence the necessary and sufficient conditions \eqref{St} (or equivalently equation \eqref{cslack}) and \eqref{Glb} are subsumed in the statement: $DG^{\frac{1}{2}}W >0$. Alternatively, the necessary and sufficient conditions can be put in the following corollary:

\begin{corollary}
\label{corollary1}
The necessary and sufficient condition for an $m$-element POVM $\{ \Pi_i \}_{i=1}^{m}$ to optimally discriminate among an ensemble of $m$ linearly independent states $\{ p_i, \; \rho_i\}_{i=1}^{m}$ is that $\{ \Pi_i \}_{i=1}^{m}$ is a projective measurment and $\sum_{i=1}^{m} p_i \rho_i \Pi_i > 0$.
\end{corollary}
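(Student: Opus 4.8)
The plan is to read off the Corollary directly from Theorem \eqref{necsufcond}, showing that the pair ``$\{\Pi_i\}$ is a projective measurement and $\sum_i p_i\rho_i\Pi_i>0$'' is merely a repackaging of conditions 1 and 2 of that theorem. Throughout I would write $Z=\sum_{i=1}^m p_i\rho_i\Pi_i$ as in \eqref{Z}; since the relevant inequality is the \emph{strict} one $Z>0$, the operator $Z$ is in particular Hermitian, so that $\sum_i p_i\rho_i\Pi_i$ and $\sum_i\Pi_i p_i\rho_i$ coincide. I would then prove the two implications separately.

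For sufficiency (the reverse direction), I would assume $\{\Pi_i\}_{i=1}^m$ is a projective measurement with $Z>0$. Because $\Pi_k\Pi_i=\delta_{ki}\Pi_i$, right multiplication of $Z$ by $\Pi_i$ collapses the sum to a single term, $Z\Pi_i=p_i\rho_i\Pi_i$, which is exactly the form $(Z-p_i\rho_i)\Pi_i=0$ of condition 1. Taking adjoints and using $Z=Z^{\dag}$, $\rho_i=\rho_i^{\dag}$, $\Pi_i=\Pi_i^{\dag}$ gives the left-handed relation $\Pi_i Z=\Pi_i p_i\rho_i$; sandwiching $Z$ between $\Pi_i$ and $\Pi_j$ then yields $\Pi_i p_i\rho_i\Pi_j=\Pi_i Z\Pi_j=\Pi_i p_j\rho_j\Pi_j$, i.e. the stationarity condition $\Pi_i(p_i\rho_i-p_j\rho_j)\Pi_j=0$ of \eqref{St}. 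Since condition 2 is literally the hypothesis $Z>0$, Theorem \eqref{necsufcond} then applies and certifies optimality. If one wishes to invoke the machinery behind Theorem \eqref{necsufcond} verbatim, one first checks that projectivity together with $Z>0$ already forces the rank condition: for nonzero $\ket{v}\in\mathrm{range}(\Pi_i)$ one has $Z\ket{v}=p_i\rho_i\ket{v}\neq0$, so $\rho_i$ is injective on $\mathrm{range}(\Pi_i)$ and $\mathrm{rank}(\Pi_i)\le r_i$; summing and using $\sum_i\mathrm{rank}(\Pi_i)=n=\sum_i r_i$ forces $\mathrm{rank}(\Pi_i)=r_i$, i.e. $\{\Pi_i\}\in\pro$.

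For necessity (the forward direction), I would assume $\{\Pi_i\}$ is optimal. That the optimal POVM of a LI ensemble is a projective measurement with $\mathrm{rank}(\Pi_i)=r_i$ is precisely the result of \cite{Yohina} quoted above, so the projectivity half is immediate. For the positivity half, optimality forces the global condition \eqref{Glb}, $Z\ge p_i\rho_i\ge0$, whence $Z\ge0$. If $Z$ were singular, choose $\ket{v}\neq0$ with $Z\ket{v}=0$; then $0=\bra{v}Z\ket{v}\ge\bra{v}p_i\rho_i\ket{v}\ge0$ gives $\rho_i\ket{v}=0$ for every $i$, so $\ket{v}$ is orthogonal to $\bigcup_i supp(\rho_i)$. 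Since these supports span $\mathcal{H}$, this forces $\ket{v}=0$, a contradiction. Hence $Z>0$, and both stated conditions hold.

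The only genuinely delicate points, as I see them, are bookkeeping rather than conceptual: one must keep track of the Hermiticity of $Z$ (which is what makes the two one-sided slackness relations equivalent and what lets the adjoint step in the sufficiency argument go through), and one must supply the short spanning argument that upgrades $Z\ge0$ to $Z>0$ in the necessity direction. Everything else is an immediate consequence of $\Pi_k\Pi_i=\delta_{ki}\Pi_i$ together with Theorem \eqref{necsufcond} and the cited characterization of the optimal POVM as a projective measurement.
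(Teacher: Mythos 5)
Your proposal is correct and follows essentially the same route as the paper, which presents this corollary as a direct repackaging of Theorem \eqref{necsufcond} together with the previously quoted fact that the optimal POVM of a LI ensemble is projective with $\mathrm{rank}(\Pi_i)=r_i$. The only difference is that you make explicit two bridging steps the paper leaves implicit --- that projectivity plus $Z>0$ collapses $Z\Pi_i$ to $p_i\rho_i\Pi_i$ and hence yields the slackness condition, and that $Z\geq p_i\rho_i$ for all $i$ together with the spanning of $\mathcal{H}$ by the supports upgrades $Z\geq 0$ to $Z>0$ --- both of which are handled correctly.
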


We can re-express the necessary and sufficient conditions to obtain the optimal POVM for MED of the ensemble $\widetilde{P}$ as: 

\begin{itemize}
 \item[\textbf{A}:] \label{AA} One needs to find a block diagonal matrix, $D=$ $ Diag ($ $X^{(11)},$ $X^{(22)}$ $,\cdots,$ $X^{(mm)} ) $ $\geq0$ where $X^{(ii)}$ is an $r_i \times r_i$ positive definite matrix, so that the diagonal blocks of the positive square root of the matrix $DGD$ are given by $\left( X^{(11)} \right)^2, \left( X^{(22)} \right)^2, \cdots,\left( X^{(mm)} \right)^2$ respectively. Here $G$ corresponds to the gram matrix of vectors $\{ \tket{\psi}{ij_i} \; | \; 1 \leq i \leq m, \; 1 \leq j_i \leq r_i\}$ where $p_i\rho_i = \sum_{j_i=1}^{r_i} \ketbrat{\psi}{ij_i}{\psi}{ij_i}$, for all $i=1,2,\cdots,m$.

\end{itemize}

Condition \textbf{A} is a rotationally invariant form of expressing conditions \eqref{cslack} (or equivalently \eqref{St}) and \eqref{Glb}. 

We will now construct the ensemble $\widetilde{Q} = \{ q_i , \sigma_i \}_{i=1}^{m} \; \in \ens$, such that $supp \left( q_i \sigma_i \right) = supp \left(  p_i \rho_i \right), \; \forall \; 1 \leq i \leq m$, and for which the relation $PGM \left(\widetilde{Q}  \right) \{ \Pi_i \}_{i=1}^{m}$ holds true. 

Using equation \eqref{chi}, define the following: 

\begin{equation}
\label{sigma}
\sigma_i \equiv \frac{1}{\sum_{k_i=1}^{r_i} \tbraket{\chi}{i k_i}{\chi}{i k_i}}  \sum_{j_i=1}^{r_i} \ketbrat{\chi}{i j_i}{\chi}{i j_i}, \; \forall \; 1 \leq i \leq m,
\end{equation}

\begin{equation}
\label{qi}
q_i \equiv \dfrac{\sum_{j_i=1}^{r_i} \tbraket{\chi}{i j_i}{\chi}{i j_i}}{\sum_{l=1}^{m} \sum_{k_l=1}^{r_l} \tbraket{\chi}{l k_l}{\chi}{l k_l}} , \; \forall \; 1 \leq i \leq m.
\end{equation}

By the very definition $q_i > 0, \; \forall \; 1 \leq i \leq m$. And since the set $\{ \tket{\chi}{ij_i} \}_{j_i=1}^{r_i}$ spans $supp \left( p_i \rho_i \right)$, we have that $supp \left( q_i \sigma_i \right) = supp \left( p_i\rho_i \right), \; \forall \; 1 \leq i \leq m$. It remains to be shown that $\{ \Pi_i \}_{i=1}^{m}$ is the PGM of $\widetilde{Q}$.

\begin{theorem}
\label{PGMtheorem}
$\{ \Pi_i \}_{i=1}^{m}$ is the PGM of $\widetilde{Q}$, i.e., $ \Pi_i = \left( \sum_{j=1}^{m} q_j \sigma_j \right)^{- \frac{1}{2} } q_i \sigma_i  \left( \sum_{k=1}^{m} q_k \sigma_k \right)^{- \frac{1}{2} }, \; \forall \; 1 \leq i \leq m$.
\end{theorem}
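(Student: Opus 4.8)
The plan is to reduce the claim to the single fact already established, namely that the matrix $DG^{\frac{1}{2}}W$ appearing in \eqref{DXG} is \emph{positive definite} (Theorem \ref{necsufcond}). First I would note that the PGM map \eqref{PGM} is invariant under a common positive rescaling of the unnormalized operators $q_i\sigma_i$: if $q_i\sigma_i \mapsto \lambda\, q_i\sigma_i$ then $\left(\sum_j q_j\sigma_j\right)^{-\frac{1}{2}}\mapsto \lambda^{-\frac{1}{2}}\left(\sum_j q_j\sigma_j\right)^{-\frac{1}{2}}$, and the three factors contribute $\lambda^{-\frac{1}{2}}\cdot\lambda\cdot\lambda^{-\frac{1}{2}}=1$. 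By \eqref{sigma} and \eqref{qi} one has $q_i\sigma_i=\frac{1}{N}\sum_{j_i=1}^{r_i}\ketbrat{\chi}{ij_i}{\chi}{ij_i}$ with $N=\sum_{l=1}^{m}\sum_{k_l=1}^{r_l}\tbraket{\chi}{lk_l}{\chi}{lk_l}$, so it suffices to set $S\equiv\sum_{i=1}^{m}\sum_{j_i=1}^{r_i}\ketbrat{\chi}{ij_i}{\chi}{ij_i}$ and prove that $\Pi_i=S^{-\frac{1}{2}}\left(\sum_{j_i=1}^{r_i}\ketbrat{\chi}{ij_i}{\chi}{ij_i}\right)S^{-\frac{1}{2}}$ for all $i$. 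Here $S>0$ because $\{\tket{\chi}{ij_i}\}$ is a basis of $\mathcal{H}$, established after \eqref{chi}.

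Next I would carry out the whole computation in the orthonormal basis $\{\ket{w_{lk_l}}\}$ of \eqref{wexpandu}, which is adapted to the optimal POVM: since $\Pi_i=\sum_{j}\ketbra{w_{ij}}{w_{ij}}$, in this basis $\Pi_i$ is exactly the coordinate projector $P_i$ onto the $i$-th block. The key step is to identify the coordinate (synthesis) matrix of the $\chi$-vectors in this basis. Writing \eqref{wexpandu} in flattened composite indices as $\ket{w_a}=\sum_b (G^{\frac{1}{2}}W)_{ba}\tket{u}{b}$, using the duality $\tbraket{\psi}{i_1j_1}{u}{i_2j_2}=\delta_{i_1i_2}\delta_{j_1j_2}$ of \eqref{psiandu}, and the definition \eqref{chi} of $\tket{\chi}{ij_i}$ through the block-diagonal matrix $D$, a short index manipulation gives $\braket{w_a}{\widetilde{\chi}_{\beta}}=(DG^{\frac{1}{2}}W)_{a\beta}$. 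Setting $M\equiv DG^{\frac{1}{2}}W$, this says the synthesis matrix of $\{\chi\}$ in the basis $\{w\}$ is $M$, so, as operators written in this basis, $S=MM^{\dagger}$ and $\sum_{j_i}\ketbrat{\chi}{ij_i}{\chi}{ij_i}=MP_iM^{\dagger}$. As a consistency check, $M^{\dagger}M=DGD$ recovers the Gram relation \eqref{innerchi}.

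Finally I invoke positivity. By Theorem \ref{necsufcond} we have $M=DG^{\frac{1}{2}}W>0$, hence $M$ is Hermitian and $S=MM^{\dagger}=M^{2}$, so that $S^{-\frac{1}{2}}=M^{-1}$. Then $S^{-\frac{1}{2}}\left(MP_iM^{\dagger}\right)S^{-\frac{1}{2}}=M^{-1}M P_i M M^{-1}=P_i=\Pi_i$, which is exactly the assertion. I expect the \textbf{main obstacle} to be purely the bookkeeping of the second step: correctly tracking the two-tier block (inter-block/intra-block) indices through \eqref{wexpandu}, \eqref{psiandu} and \eqref{chi} so that the synthesis matrix lands precisely on $M=DG^{\frac{1}{2}}W$ rather than on one of its adjoints or block rearrangements. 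Once that identification is secured, the Hermiticity $M=M^{\dagger}$ forced by $M>0$ collapses the PGM expression to $P_i$ at once. Equivalently, one may phrase steps two and three frame-theoretically: $S$ is the frame operator of the basis $\{\chi\}$, so $\{S^{-\frac{1}{2}}\tket{\chi}{ij_i}\}$ is an orthonormal basis, and the positivity of $M$ identifies its $i$-th block with $\{\ket{w_{ij_i}}\}$; but the matrix route above is the most economical.
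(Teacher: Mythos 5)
Your proof is correct, and it rests on the same pivotal fact as the paper's proof --- that $M \equiv DG^{\frac{1}{2}}W$ is the \emph{positive} square root of the Gram matrix $DGD$ of the $\tket{\chi}{ij_i}$ vectors --- but the execution is genuinely different and, frankly, cleaner. The paper introduces a biorthogonal family $\tket{\omega}{ij_i}$ dual to the $\tket{\chi}{ij_i}$, expands $\Pi_i$, $\left(\sum_i q_i\sigma_i\right)^{-1}$ and $\left(\sum_i q_i\sigma_i\right)^{-\frac{1}{2}}$ in the $\ketbrat{\omega}{l_1k_1}{\omega}{l_2k_2}$ operator basis (equations \eqref{Pomega}--\eqref{sumsigmainvsq}), and then concludes ``after a bit of algebra''; you instead work in the orthonormal basis $\{\ket{w_{ij_i}}\}$ of \eqref{Pidecomposition}, where $\Pi_i$ is literally a coordinate projector and the PGM expression collapses to $M^{-1}MP_iMM^{-1}=P_i$ in one line. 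Your route buys a fully explicit final step (no residual algebra left to the reader) and avoids constructing the dual frame altogether; the cost is exactly the bookkeeping you flag: a direct computation of the overlap gives $\braket{w_{a}}{\widetilde{\chi}_{\beta}}=\left(W^{\dag}G^{\frac{1}{2}}D\right)_{a\beta}=\left(M^{\dag}\right)_{a\beta}$ rather than $M_{a\beta}$, and one must invoke the Hermiticity of $M$ (guaranteed by $M>0$, i.e.\ by the stationarity condition \eqref{St6} together with Theorem \eqref{Xii} and the positivity of $Z$) to identify the two --- which you correctly anticipate. Your preliminary reduction via the scale invariance of the PGM map is also a nice touch that the paper handles implicitly by carrying the normalization constant $\sum_{s}\sum_{t_s}\tbraket{\chi}{st_s}{\chi}{st_s}$ through equations \eqref{sumsigma}--\eqref{sumsigmainvsq}. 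In short: same key lemma, different and more economical frame for the computation; no gaps.
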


\begin{proof}

We introduce a set of vectors complementary to the set $\{ \tket{\chi}{i j_i} \}_{i=1, \; j_i=1}^{i = m, \; j_i = r_i}$ in the same way that the vectors $\{ \tket{u}{i j_i} \}_{i=1, \; j_i=1}^{i = m, \; j_i = r_i}$ is complementary to the set $\{ \tket{\psi}{i j_i} \}_{i=1, \; j_i=1}^{i = m, \; j_i = r_i}$ based on equation \eqref{psiandu}.

\begin{equation}
 \label{chiandomega}
 \tbraket{\chi}{i_1j_1}{\omega}{i_2j_2} = \delta_{i_1 i_2}\delta_{j_1j_2}, \; \forall \; 1 \leq i_1, i_2 \leq m, \; 1 \leq j_1 \leq r_{i_1}, \; 1 \leq j_2 \leq r_{i_2}.
\end{equation}

Based on the definition of the vectors $\{ \tket{\chi}{i j_i} \}_{i=1, \; j_i=1}^{i = m, \; j_i = r_i}$ from equation \eqref{chi}:
\begin{equation}
\label{wu}
\tket{\omega}{ij_i} \equiv \sum_{l =1}^{m} \sum_{k_l=1}^{r_l} \left({D}^{-1}\right)^{(l \; i)}_{k_l \; j_i}\tket{u}{l k_l}
\end{equation}
From the definition of $\tket{\omega}{ij_i}$ in equations \eqref{chiandomega}, \eqref{wu} it is easy to see that $\{ \tket{\omega}{ij_i} \}_{i=1, \; j_i=1}^{i = m, \; j_i = r_i}$ will form a linearly independent set. We can expand $\ket{w_{ij}}$ from equation \eqref{Pidecomposition} in $\tket{\omega}{ij}$:
\begin{equation}
\label{mixedv3}
\ket{w_{ij}} = \sum_{l=1}^{m}\sum_{k_l=1}^{r_l} \left( D G^{\frac{1}{2}}W \right)^{(l \; i)}_{k_l \; j}\tket{\omega}{l k_l}, 
\end{equation} and, similar to equation \eqref{mixedP} we get: 

\begin{equation}
\label{Pomega}
\Pi_i = \sum_{l_1, l_2 = 1}^{m} \sum_{k_1=1}^{r_1} \sum_{k_2=1}^{r_2} \left( \sum_{j=1}^{r_i} \left( D G^{\frac{1}{2}}W \right)^{(l_1 \; i)}_{k_1  \;  j}  \left( W^\dag G^{\frac{1}{2}} D \right)^{(i  \;  l_2)}_{j  \; k_2} \right) \ketbrat{\omega}{l_1 k_1}{\omega}{l_2 k_2}.
\end{equation}

We will prove that $\left( \sum_{j=1}^{m} q_j \sigma_j \right)^{- \frac{1}{2} } q_i \sigma_i  \left( \sum_{k=1}^{m} q_k \sigma_k \right)^{- \frac{1}{2} }$ is equal to the RHS of equation \eqref{Pomega}, $\forall \; 1 \leq i \leq m$. That will prove the theorem.

By the definition of $\sigma_i$ in equation \eqref{sigma} we get that $ \sum_{i=1}^{m} q_i \sigma_i $ is given by:

\begin{equation}
\label{sumsigma}
\sum_{i=1}^{m} q_i \sigma_i = \frac{1}{\sum_{s=1}^{m} \sum_{t_s=1}^{r_s} \tbraket{\chi}{s t_s}{\chi}{s t_s}} \sum_{l=1}^{m} \sum_{k_l=1}^{r_l} \ketbrat{\chi}{l k_l}{\chi}{l k_l}
\end{equation}

Using equation \eqref{sumsigma}, it can easily be verified that:

\begin{equation}
\label{sumsigmainv}
\left( \sum_{i=1}^{m} q_i \sigma_i \right)^{-1} = \left( \sum_{s=1}^{m} \sum_{t_s=1}^{r_s} \tbraket{\chi}{s t_s}{\chi}{s t_s} \right)  \sum_{l=1}^{m} \sum_{k_l=1}^{r_l} \ketbrat{\omega}{l k_l}{\omega}{l k_l}
\end{equation}

Bearing in mind the $DG^\frac{1}{2} W$ is the positive square root of the matrix $DGD$, and that $DGD$ is the gram matrix of the set of vectors $\{ \tket{\chi}{i j_i} \}_{i=1, \; j_i=1}^{i = m, \; j_i = r_i}$, it can be easily verified that:

\begin{equation}
\label{sumsigmainvsq}
\left( \sum_{i=1}^{m} q_i \sigma_i \right)^{-\frac{1}{2}} = \left( \sum_{s=1}^{m} \sum_{t_s=1}^{r_s} \tbraket{\chi}{s t_s}{\chi}{s t_s} \right)^\frac{1}{2}  \sum_{l_1=1}^{m} \sum_{k_1=1}^{r_{l_1}} \sum_{l_2=1}^{m} \sum_{k_2=1}^{r_{l_2}} \ketbrat{\omega}{l_1 k_1}{\omega}{l_2 k_2} \left( D G^\frac{1}{2} W \right)^{ \left( l_1 \; l_2 \right) }_{k_1 \; k_2}.
\end{equation}

Using the expression for $\left( \sum_{i=1}^{m} q_i \sigma_i \right)^{-\frac{1}{2}}$ in equation \eqref{sumsigmainvsq}, the expression for $q_i \sigma_i$ in equations \eqref{sigma} and \eqref{qi} and after a bit of algebra we get the result that $\left( \sum_{j=1}^{m} q_j \sigma_j \right)^{- \frac{1}{2} } q_i \sigma_i  \left( \sum_{k=1}^{m} q_k \sigma_k \right)^{- \frac{1}{2} }$ is equal to the RHS of equation \eqref{Pomega}, $\forall \; 1 \leq i \leq m$. This establishes that $\{ \Pi_i \}_{i=1}^{m} = PGM (\widetilde{Q})$. Hence proved.
\end{proof}

Thus we have shown that for every $\widetilde{P} = \{ p_i, \rho_i \}_{i=1}^{m} \in \ens$ there exists an ensemble $\widetilde{Q} = \{ q_i, \sigma_i \}_{i=1}^{m} \in \ens$ such that $supp \left( q_i \sigma_i \right) = supp \left( p_i \rho_i \right), \; \forall \; 1 \leq i \leq m$ and such that $\widetilde{Q}$'s PGM is $\{ \Pi_i \}_{i=1}^{m} = \mathscr{P}\left( \widetilde{Q} \right)$. This establishes the $\widetilde{P} \longrightarrow \widetilde{Q}$ correspondence mentioned in the end of the previous subsection. 

The next question that needs to be answered is whether there was any ambiguity in the way we arrived at the ensemble $\widetilde{Q}$ for a given $\widetilde{P}$? The only ambiguity that we have allowed to remain is in the choice of the decomposition of the states $p_i \rho_i$ in the pure unnormalized states $\tket{\psi}{i j_i}$ in equation \eqref{rhodecomposition}. For a given choice of such a decomposition for all $i = 1, 2, \cdots, m$, we arrived at a unique $n \times n$ unitary $W$ such that the block diagonal matrix $D$, defined in equation \eqref{part1} and equation \eqref{DX}, is positive definite. And using the $X^{(ii)}$ matrices we arrived at the set of states $\tket{\chi}{i j_i}$ in equation \eqref{chi} from which the states $q_i \sigma_i$ were constructed using equations \eqref{sigma} and \eqref{qi}. It is now natural to ask if the final states $q_i \sigma_i$ depend on the choice of the decomposition of the $p_i \rho_i$'s used in equation \eqref{rhodecomposition}. Very briefly we take the reader 
through the sequence of steps that show that this isn't the case.

Let $U'^{(i)}$ be an $r_i \times r_i$ unitary, for $i=1, 2, \cdots, m$. Arrange the $m$ unitary matrices - $U'^{(1)}$, $U'^{(2)}$, $\cdots$, $U'^{(m)}$ as diagonal blocks of an $n \times n$ unitary matrix which we call $U'_D$:

\begin{equation}
\label{U'D}
U'_D = \begin{pmatrix}
{U'}^{(1)} & 0 & \cdots & 0 \\
0 & {U'}^{(2)} & \cdots & 0 \\
\vdots & \vdots & \ddots & \vdots\\
0 & 0 & \cdots & {U'}^{(m)}
\end{pmatrix}.
\end{equation} 

Define the following:

\begin{eqnarray}
\label{psi'}
\tket{\psi '}{i j_i} \equiv \sum_{l=1}^{m} \sum_{k_l =1 }^{r_l} \left( U'_D \right)^{\left( l \; i \right)}_{k_l \; j_i} \tket{\psi}{i j_i}, \; \forall \; 1 \leq i \leq m, 1 \leq j_i \leq r_i, \\ 
\label{u'}
\tket{u '}{i j_i} \equiv \sum_{l=1}^{m} \sum_{k_l =1 }^{r_l} \left( U'_D \right)^{\left( l \; i \right)}_{k_l \; j_i} \tket{u}{i j_i},  \; \forall \; 1 \leq i \leq m, 1 \leq j_i \leq r_i.
\end{eqnarray}

Note that $p_i \rho_i = \sum_{j=1}^{r_i} \ketbrat{\psi '}{i j_i}{\psi '}{i j_i}$, $\forall \; 1 \leq i \leq m$, which implies that we now have an alternative decomposition of the states $p_i \rho_i$ into the pure states $\tket{\psi}{i j_i}$.  Also note that:

\begin{equation}
\label{psi'andu'}
\tbraket{\psi'}{i_1j_1}{u'}{i_2j_2} = \delta_{i_1 i_2} \delta_{j_1 j_2}, \; \forall \; 1 \leq i_1,i_2 \leq m \text{ and } 1 \leq j_1 \leq r_{i_1}, \; 1 \leq j_2 \leq r_{i_2},
\end{equation}

which is similar to equation \eqref{psiandu}. 

Equation \eqref{wexpandu} modifies to:

\begin{equation} 
\label{wexpandu'}
\ket{w_{ij_i}} = \sum_{l=1}^{m}\sum_{k_l=1}^{r_l} \left( {U'_D}^\dag G^\frac{1}{2} W U'_D \right)^{(l \; i)}_{k_l \; j_i} \tket{u'}{lk_l}, \; \forall \; 1 \leq i \leq m, \; 1 \leq j_i \leq r_i.
\end{equation} 

Earlier on, we chose the $n \times n$ unitary $W$ in such a manner that the diagonal blocks of $G^\frac{1}{2}W$, i.e., the matrices $X^{(11)}$, $X^{(22)}$, $\cdots$, $X^{(mm)}$ are hermitian (and positive definite). The diagonal blocks now become ${U'^{(1)}}^\dag X^{(11)}$, ${U'^{(2)}}^\dag X^{(22)}$, $\cdots$, ${U'^{(m)}}^\dag X^{(mm)}$. Hence we now employ a different decomposition for the projectors $\Pi_i$ than given in equation \eqref{wexpandu}:

\begin{equation} 
\label{w'expandu'}
\ket{w'_{ij_i}} = \sum_{l=1}^{m}\sum_{k_l=1}^{r_l} \left( {U'_D}^\dag G^\frac{1}{2} W U'_D \right)^{(l \; i)}_{k_l \; j_i} \tket{u'}{lk_l}, \; \forall \; 1 \leq i \leq m, \; 1 \leq j_i \leq r_i.
\end{equation} 

The diagonal blocks in this case are ${U'^{(1)}}^\dag X^{(11)} U'^{(1)}$, ${U'^{(2)}}\dag X^{(22)} U'^{(2)}$, $\cdots$, ${U'^{(m)}}\dag X^{(mm)} U'^{(m)}$, which are not only hermitian but positive definite (since $X^{(ii)} > 0, \; \forall \; 1 \leq i \leq m $).

Just in the case of equation \eqref{chi}, define:

\begin{eqnarray}
\label{chi'}
& \tket{\chi'}{ij_i} & =  \sum_{\substack{k=1}}^{r_i}  \left( {U'^{(i)}}^\dag X^{(ii)} U'^{(i)}  \right)_{kj} \tket{\psi '}{ik} \notag \\
& ~ & =   \sum_{\substack{k=1}}^{r_i}  \left( X^{(ii)} U'^{(i)}  \right)_{kj} \tket{\psi }{ik}, \forall \; 1 \leq i \leq m, \; 1 \leq j_i \leq r_i.
\end{eqnarray}

Using equation \eqref{chi'} and equation \eqref{chi} it isn't difficult to show that:

\begin{equation}
\label{chichi'}
\sum_{j=1}^{r_i} \ketbrat{\chi '}{i j}{\chi '}{i j} = \sum_{k=1}^{r_i} \ketbrat{\chi }{i k}{\chi }{i k}, \; \forall \; 1 \leq i \leq m.
\end{equation}

Using equation \eqref{sigma} and \eqref{qi} we get that:

\begin{equation}
\label{sigma'}
\sigma_i = \frac{1}{\sum_{k_i=1}^{r_i} \tbraket{\chi '}{i k_i}{\chi '}{i k_i}}  \sum_{j_i=1}^{r_i} \ketbrat{\chi '}{i j_i}{\chi '}{i j_i}, \; \forall \; 1 \leq i \leq m.
\end{equation}

\begin{equation}
\label{q'i}
q'_i \equiv \dfrac{\sum_{j_i=1}^{r_i} \tbraket{\chi '}{i j_i}{\chi '}{i j_i}}{\sum_{l=1}^{m} \sum_{k_l=1}^{r_l} \tbraket{\chi '}{l k_l}{\chi '}{l k_l}} , \; \forall \; 1 \leq i \leq m.
\end{equation}

This establishes that the correspondence $\widetilde{P} \longrightarrow \widetilde{Q}$ is invariant over the choice of pure state decompositions of $p_i\rho_i$\footnote{Actually, this association is also invariant over the choice of spectral decomposition of $\Pi_i$ in equation \eqref{Pidecomposition}. Our choice of spectral decomposition was such that the $D$ matrix, defined in equation \eqref{DX}, is positive definite for the sake of the convenience this offers; this isn't necessary.}. Going through all the steps taken to construct th ensemble $\widetilde{Q}$ from the ensemble $\widetilde{P}$ and the optimal POVM $\mathscr{P}\left( \widetilde{Q} \right) = \{ \Pi_i \}_{i=1}^{m}$, we can see that there is no degree of freedom on account of which the association of $\widetilde{P}$ to $\widetilde{Q}$ can be regarded as ambiguous. This tells us that the correspondence $\widetilde{P} \longrightarrow \widetilde{Q}$ is a map from $\ens$ to itself. We denote this map by $\mathscr{R}$; thus we have $\mathscr{R} : \
ens \longrightarrow \ens$, such that $\mathscr{R} \left( \widetilde{P} \right) = \widetilde{Q}$ and such that $\mathscr{P}\left(\widetilde{P} \right) = PGM\left(\mathscr{R}\left(\widetilde{P}\right)\right)$. 

\subsection{Invertibility of $\mathscr{R}$}
\label{invertibilityofR}

The existence of the map $\mathscr{R}$ was already demonstrated in \cite{Mas}. The reason we went through the elaborate process of re-demonstrating its existence is that these sequence of steps enables us to trivially establish that the map $\mathscr{R}$ is invertible.

We first show that $\mathscr{R}$ is onto. 
\begin{theorem}
\label{Ronto}
The map $\mathscr{R}$ is onto.
\end{theorem}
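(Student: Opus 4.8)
The plan is to prove surjectivity constructively: for an arbitrary target ensemble $\widetilde{Q} = \{q_i,\sigma_i\}_{i=1}^{m} \in \ens$ I will write down an explicit preimage $\widetilde{P} \in \ens$ and then verify that running the forward construction of Section \ref{PQcorr} on $\widetilde{P}$ returns exactly $\widetilde{Q}$, i.e. $\mathscr{R}(\widetilde{P}) = \widetilde{Q}$. The construction simply inverts, step by step, the recipe that produced $\widetilde{Q}$ from $\widetilde{P}$, so the positive square root and the block structure do all the work.

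First I would fix a pure-state decomposition $q_i\sigma_i = \sum_{j_i=1}^{r_i} \ketbrat{\chi}{ij_i}{\chi}{ij_i}$ and let $\Gamma$ be the Gram matrix of the full family $\{\tket{\chi}{ij_i}\}$; since $\widetilde{Q}$ is LI with supports spanning $\mathcal{H}$, we have $\Gamma > 0$. Take the positive square root $A \equiv \Gamma^{1/2} > 0$ and extract its block-diagonal blocks $A^{(ii)}$, each a positive-definite $r_i \times r_i$ matrix. Set $X^{(ii)} \equiv (A^{(ii)})^{1/2} > 0$ and $D \equiv \mathrm{Diag}(X^{(11)},\dots,X^{(mm)}) > 0$, exactly the block-diagonal data that condition \textbf{A} requires. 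Then put $G \equiv D^{-1}\Gamma D^{-1} > 0$ and realize vectors $\tket{\psi}{ik} \equiv \sum_{l} \tket{\chi}{il}\,(X^{(ii)})^{-1}_{lk}$; a one-line computation shows their Gram matrix is $G$. Defining $p_i\rho_i \equiv \sum_{j_i}\ketbrat{\psi}{ij_i}{\psi}{ij_i}$ (rescaled so the weights sum to one, a rescaling that changes neither $\mathscr{P}$ nor the PGM) gives an ensemble $\widetilde{P}$; because each $X^{(ii)}$ is invertible, $\{\tket{\psi}{ij_i}\}_{j_i}$ spans $supp(q_i\sigma_i)$, so $rank(\rho_i)=r_i$ and the supports stay linearly independent and spanning, whence $\widetilde{P}\in\ens$.

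The heart of the argument is verifying that this $\widetilde{P}$ maps to $\widetilde{Q}$. The key identity is $DGD = D(D^{-1}\Gamma D^{-1})D = \Gamma$, so the positive square root of $DGD$ is $\Gamma^{1/2}=A$, whose diagonal blocks are $A^{(ii)} = (X^{(ii)})^2$. This is precisely condition \textbf{A} for $\widetilde{P}$, and the associated matrix is $DG^{\frac{1}{2}}W = (DGD)^{1/2} = \Gamma^{1/2} > 0$; hence by Theorem \ref{necsufcond} (equivalently Corollary \ref{corollary1}) the projective measurement $\{\Pi_i\}$ built from this $D$ is optimal, and by uniqueness it equals $\mathscr{P}(\widetilde{P})$. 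Feeding this $D$ back into the defining relation \eqref{chi} recovers $\sum_k X^{(ii)}_{kj}\tket{\psi}{ik} = \sum_{k,l} X^{(ii)}_{kj}(X^{(ii)})^{-1}_{lk}\tket{\chi}{il} = \tket{\chi}{ij}$, so equations \eqref{sigma} and \eqref{qi} return exactly the original $\{q_i,\sigma_i\}$; therefore $\mathscr{R}(\widetilde{P}) = \widetilde{Q}$ and $\mathscr{R}$ is onto.

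The step deserving the most care, and the main obstacle, is the optimality verification: one must confirm that the $D$ read off from the block-diagonal of $\Gamma^{1/2}$ genuinely solves condition \textbf{A}, i.e. that taking one global positive square root and then block-wise positive square roots of its diagonal blocks is consistent with the block-square-root demand of \textbf{A}. The identity $DGD=\Gamma$ is exactly what reconciles these two square-root operations and is what makes the whole inversion go through. The remaining points are routine bookkeeping: the overall trace-normalization of $\widetilde{P}$, dispatched by the scale-invariance of both $\mathscr{P}$ and the PGM, and the $U(r_i)$ freedom in choosing the $\tket{\chi}{ij_i}$, which is irrelevant here since for surjectivity it suffices to exhibit a single preimage.
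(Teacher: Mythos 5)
Your construction is the paper's own proof in different notation: your $\Gamma$, $A^{(ii)}$, $X^{(ii)}=(A^{(ii)})^{1/2}$ and $\tket{\psi}{ik}=\sum_l (X^{(ii)})^{-1}_{lk}\tket{\chi}{il}$ correspond exactly to the paper's $F$, $H^{(ii)}$, and $\tket{\phi}{ij}=\sum_k ((H^{(ii)})^{-1/2})_{kj}\tket{\zeta}{ik}$, and the optimality check via $DGD=\Gamma$ and $DG^{\frac{1}{2}}W=\Gamma^{1/2}>0$ is the same invocation of Corollary \eqref{corollary1} that the paper makes by computing $Z=\sum_i p_i'\rho_i'\Omega_i \propto$ a positive operator. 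The proof is correct and follows essentially the same route, including the final round-trip verification that the forward construction returns $\widetilde{Q}$.
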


\begin{proof} 

This means we have to prove that $\forall \; \widetilde{Q} \in \ens,$  $ \exists \; \text{some } \widetilde{P} \in \ens$ $\ni \; \mathscr{R} \left( \widetilde{P} \right) = \widetilde{Q}$. 

Let $\widetilde{Q} = \{ q_i, \sigma_i \}_{i=1}^{m} \in \ens$. Thus $supp \left( q_1 \rho_1 \right)$, $supp \left( q_2 \rho_2 \right)$, $\cdots$, $supp \left( q_m \rho_m \right)$ are LI subspaces of $\mathcal{H}$ of dimensions $r_1, r_2, \cdots, r_m$ respectively. Let the following be a resolution of the state $q_i \sigma_i$ into pure states:

\begin{equation}
\label{sigmaresolution}
q_i \sigma_i = \sum_{j=1}^{r_i} \ketbrat{\zeta}{ij_i}{\zeta}{ij_i}, \; \forall \; 1 \leq i \leq m.
\end{equation}

There is a $U \left( r_i \right)$ degree of freedom of choosing such a resolution. The set $\{ \tket{\zeta}{ij_i} \}_{i=1, \; j_i = 1}^{i=m, \; j_i = r_i}$ is LI. Let's denote the gram matrix corresponding to the set of states $\{ \tket{\zeta}{ij_i} \}_{i=1, \; j_i = 1}^{i=m, \; j_i = r_i}$ by $F$. The matrix elements of $F$ are given by:

\begin{equation}
\label{Fmatrixelement}
F^{(i_1 \; i_2)}_{j_1 \; j_2} = \tbraket{\zeta}{i_1j_1}{\zeta}{i_2j_2}, \; \forall \; 1 \leq i_1, i_2 \leq m, \; 1 \leq j_1 \leq r_{i_1}, \; 1 \leq j_2 \leq r_{i_2}.
\end{equation}

$F^{\frac{1}{2}}$ is the positive definite square root of $F$. Partition $F^{\frac{1}{2}}$ in the following manner: 

\begin{equation}
\label{Froot}
F^{\frac{1}{2}}= \begin{pmatrix}
                  H^{(11)} &  H^{(12)} & \cdots & H^{(1m)} \\
                  H^{(21)} &  H^{(22)} & \cdots & H^{(2m)} \\
                  \vdots & \vdots & \ddots & \vdots \\
                  H^{(m1)} &  H^{(m2)} & \cdots & H^{(mm)} 
                 \end{pmatrix},
\end{equation} where $H^{(ij)}$ is the $\left(i, j \right)$-th block matrix in $F$ and is of dimension $r_i \times r_j$, $\forall \; 1 \leq i, j \leq m$. Note that $F^{\frac{1}{2}} > 0 $ implies that $H^{(ii)} > 0, \; \forall \; 1 \leq i \leq m$.

Corresponding to the set $\{ \tket{\zeta}{ij_i} \}_{i=1, \; j_i = 1}^{i=m, \; j_i = r_i}$ $\exists$ another unique set $\{ \tket{z}{ij_i} \}_{i=1, \; j_i = 1}^{i=m, \; j_i = r_i}$  such that

\begin{equation}
\label{zetaz}
\tbraket{\zeta}{i_1j_1}{z}{i_2j_2} = \delta_{i_1, i_2}\delta_{j_1,j_2}, \; \forall \; 1 \leq i_1, \; \leq i_2 \leq m, \; 1 \leq j_1 \leq r_{i_1}, \; 1 \leq j_2 \leq r_{i_2}.
\end{equation}

The relation that the set $\{ \tket{z}{ij_i} \}_{i=1, j_i = 1}^{i=m, j_i = r_i}$ bears to $\{ \tket{\zeta}{ij_i} \}_{i=1, j_i = 1}^{i=m, j_i = r_i}$ is equivalent to that which $\{ \tket{u}{ij_i} \}_{i=1, j_i = 1}^{i=m, j_i = r_i}$ bears to $\{ \tket{\psi}{ij_i} \}_{i=1, j_i = 1}^{i=m, j_i = r_i}$ (see equation \eqref{psiandu}); or as  $\{ \tket{\omega}{ij_i} \}_{i=1, j_i = 1}^{i=m, j_i = r_i}$ bears to $\{ \tket{\chi}{ij_i} \}_{i=1, j_i = 1}^{i=m, j_i = r_i}$ (see equation \eqref{chiandomega}). 

Let the PGM of $\{q_i, \; \sigma_i \}_{i=1}^{m}$ be denoted by $\{ \Omega_i \}_{i=1}^{m}$. Thus $\Omega_i \geq 0$ and $\sum_{i=1}^{m} \Omega_i \, = \, \mathbb{1}$. In the body of the proof of theorem \eqref{PGMtheorem} we constructed the PGM for an ensemble of mixed states using the pure state decomposition of the corresponding mixed states. Following the same sequence of steps gives us the $\Omega_i$ projectors  expanded in the $\{ \ketbrat{z}{l_1 k_1}{z}{l_2k_2} \; | \; 1 \leq l_1,l_2 \leq m, \; 1 \leq l_1 \leq r_{l_1}, \; 1 \leq k_2 \leq r_{l_2} \}$ operator basis:
 
\begin{equation}
\label{omegaexp}
\Omega_i = \sum_{l_1, l_2 =1}^{m}\sum_{k_1=1}^{r_{l_1}}\sum_{k_2=1}^{r_{l_2}}
\left( \sum_{j=1}^{r_i} \left( F^{\frac{1}{2}} \right)^{(l_1 \; i)}_{k_1 \; j} \left( F^{\frac{1}{2}} \right)^{(i \; l_2)}_{j \; k_2} \right) \ketbrat{z}{l_1k_1}{z}{l_2k_2}, \; \forall \; 1 \leq i \leq m.
\end{equation}

The gram matrix of the set $\{ \tket{z}{ij_i} \}_{i=1, j_i = 1}^{i=m, j_i = r_i}$ is given by $F^{-1}$ and using this fact it is trivial to show that the operators $\Omega_i$, given in equation \eqref{omegaexp}, are indeed projectors. Thus we have the PGM of the ensemble $\widetilde{Q}$ with us. Now we construct the ensemble which we will denote by $\widetilde{P'} = \{p'_i, \rho'_i \}_{i=1}^{m}$. This ensemble will be such that $\mathscr{R} \left( \widetilde{P}' \right) = \widetilde{Q}.$

Define the following:

\begin{eqnarray}
& \tket{\phi}{ij} & \equiv \sum_{k=1}^{r_i} \left( \left( H^{(ii)} \right)^{-\frac{1}{2}} \right)_{kj} \tket{\zeta}{ik}, \; \forall \; 1 \leq i \leq m, \\ 
\label{p'_irho'_i}
& p'_i \rho'_i & \equiv \frac{1}{\sum_{l=1}^{m} \sum_{k_l=1}^{r_l} \tbraket{\phi}{l k_l}{\phi}{l k_l}}  \sum_{j_i=1}^{r_i} \ketbrat{\phi}{i j_i}{\phi}{i j_i}, \; \forall \; 1 \leq i \leq m.
\end{eqnarray}

Note that $supp \left(p_i' \rho_i' \right) = supp \left(p_i \rho_i \right), \; \forall \; 1 \leq i \leq m$. This also implies that $\widetilde{P}' \in \ens$.

Let's denote $c = \frac{1}{\sum_{l=1}^{m} \sum_{k_l=1}^{r_l} \tbraket{\phi}{l k_l}{\phi}{l k_l}}$. We insert  equations \eqref{p'_irho'_i} and \eqref{omegaexp} into equation \eqref{Z} to obtain:

\begin{eqnarray}
& Z & = \sum_{\substack{i=1}}^{m} p'_i \rho'_i \Omega_i \notag \\
& ~ & = c \sum_{i_1,i_2=1}^{m}\sum_{j_1=1}^{r_{i_1}} \sum_{j_2=1}^{r_{i_2}} \left( F^{-\frac{1}{2}} \right)^{(i_1 \; i_2)}_{j_1 \; j_2}\ketbrat{\zeta}{i_1j_1}{\zeta}{i_2j_2} \; > \; 0
\end{eqnarray}

$PGM \left( \widetilde{Q} \right) = \{ \Omega_i \}_{i=1}^{m}$ is a projective measurment and $Z = \sum_{i=1}^{m} p_i \rho_i \Omega_i >0$. By the corollary \eqref{corollary1}, $PGM \left( \widetilde{Q} \right) = \mathscr{P} \left( \widetilde{P'} \right)$. We still need to verify if $\mathscr{R}\left(\widetilde{P}'\right) = \widetilde{Q}$ or not. To this purpose we need to construct the ensemble $\widetilde{Q}'$ from $\widetilde{P}'$ in the same way as $\widetilde{Q}$ was constructed from $\widetilde{P}$ in section \eqref{PQcorr}. Let's start by defining:

\begin{equation}
\label{DA}
D_A \equiv \begin{pmatrix}
            \left( H^{(11)} \right)^{-\frac{1}{2}} & 0 &  \cdots & 0 \\
            0 & \left( H^{(22)} \right)^{-\frac{1}{2}} &  \cdots & 0\\
            \vdots & \vdots & \ddots & \vdots \\
            0 & 0 & \cdots & \left( H^{(mm)} \right)^{-\frac{1}{2}}
           \end{pmatrix}
\end{equation}

From equation \eqref{p'_irho'_i} we see that the vectors $\{ \tket{\phi}{i j_i} \}_{j_i=1}^{r_i}$ form a resolution of the state $p'_i \rho'_i$. The set of vectors $\{ \tket{\phi}{i j_i} \}_{i=1,\;j_i=1}^{i=m, \; j_i=r_i}$ are LI, so the gram matrix associated with this set, which we denote by $G'$, must be positive definite. Indeed it is given by $G' = c D_A F D_A$ which is positive definite. The matrix equivalent of $G^\frac{1}{2} W$, given in equation \eqref{part1} , in this case is $\sqrt{c} D_A F^\frac{1}{2}$. Note that, upto unitary degree of freedom in the choice of the decomposition of the states $p'_i \rho'_i$ into pure unnormalized states $\tket{\phi}{i j_i}$, the matrix $\sqrt{c} D_A F^\frac{1}{2}$ can be uniquely associated with the ensemble $\{p'_i, \rho'_i \}_{i=1}^{m}, \; \forall \; 1 \leq i \leq m$. The diagonal blocks of $\sqrt{c} D_A F^\frac{1}{2}$ are $\sqrt{c} \left( H^{(11)} \right)^\frac{1}{2}, \; \sqrt{c} \left( H^{(22)} \right)^\frac{1}{2}, \; \cdots, \;\sqrt{c} \left( H^{(mm)} \right)
^\frac{1}{2}$. Hence, the role played by $D>0$, given in equation \eqref{DX}, here is $\sqrt{c} \left(D_A \right)^{-1}$. Thus the matrix equivalent of $D G^\frac{1}{2} W$, given in equation \eqref{DXG}, here is $c F^\frac{1}{2}$ which is positive definite, and whose block diagonals - $c H^{(11)}, \; c H^{(22)}, \cdots, c H^{(mm)}$, are squares of the block diagonals of the matrix $\sqrt{c} D_A F^\frac{1}{2}$. We can construct a new set of vectors $\{ \tket{\zeta'}{ij_i} \}_{i=1,j_i=1}^{i=m,j_i=r_i}$ from $\{ \tket{\phi}{ij_i} \}_{i=1,j_i=1}^{i=m,j_i=r_i}$ in the same way $\{ \tket{\chi}{ij_i} \}_{i=1,j_i=1}^{i=m,j_i=r_i}$ were constructed  from $\{ \tket{\psi}{ij_i} \}_{i=1,j_i=1}^{i=m,j_i=r_i}$ in equation \eqref{chi}; the role of $X^{(ii)}$ being played by $\sqrt{c}\left(H^{(ii)}\right)^\frac{1}{2}$. But then we get that $\tket{\zeta'}{ij_i} = \tket{\zeta}{ij_i}, \; \forall \; 1 \leq i \leq m, \; 1 \leq j_i \leq r_i$. This tells us that $q'_i \sigma'_i = \sum_{j=1}^{r_i} \ketbrat{\zeta'}{ij_i}{\zeta'}{ij_i}
$, $\forall \; 1 \leq i \leq m$. This shows us that $\mathscr{R}\left( \widetilde{P'} \right) = \widetilde{Q}$ is indeed true. Hence $\mathscr{R}$ is onto. \end{proof}

We next prove that $\mathscr{R}$ is one-to-one. 

\begin{theorem}
\label{Roneone}
$\mathscr{R}$ is one-to-one. 
\end{theorem}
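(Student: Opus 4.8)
The plan is to exploit the explicit preimage construction already carried out in the proof of theorem \eqref{Ronto}. That construction is a map $\mathscr{S} : \ens \longrightarrow \ens$ sending $\widetilde{Q} = \{q_i, \sigma_i\}_{i=1}^m$ to the ensemble $\widetilde{P}' = \{p_i', \rho_i'\}_{i=1}^m$ of equation \eqref{p'_irho'_i}, and theorem \eqref{Ronto} shows $\mathscr{R}\left(\mathscr{S}\left(\widetilde{Q}\right)\right) = \widetilde{Q}$, i.e. $\mathscr{S}$ is a right inverse of $\mathscr{R}$. To prove injectivity it therefore suffices to show that $\mathscr{S}$ is also a \emph{left} inverse, $\mathscr{S}\left(\mathscr{R}\left(\widetilde{P}\right)\right) = \widetilde{P}$ for every $\widetilde{P} \in \ens$; for then $\mathscr{R}(\widetilde{P}_1) = \mathscr{R}(\widetilde{P}_2)$ immediately forces $\widetilde{P}_1 = \mathscr{S}(\mathscr{R}(\widetilde{P}_1)) = \mathscr{S}(\mathscr{R}(\widetilde{P}_2)) = \widetilde{P}_2$.

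So first I would fix $\widetilde{P}$, run the construction of section \eqref{PQcorr} to obtain $\widetilde{Q} = \mathscr{R}(\widetilde{P})$, and record that by equations \eqref{chi}, \eqref{sigma} and \eqref{qi} one has $q_i\sigma_i = \frac{1}{N}\sum_{j=1}^{r_i}\ketbrat{\chi}{ij}{\chi}{ij}$ with $N = \sum_{l=1}^m\sum_{k_l=1}^{r_l}\tbraket{\chi}{lk_l}{\chi}{lk_l} = \text{Tr}(DGD)$, the gram matrix of the $\tket{\chi}{ij}$ being $DGD$. The second step is to feed this $\widetilde{Q}$ into $\mathscr{S}$. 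Here I would invoke the decomposition-independence of $\mathscr{S}$ — the analogue, for the onto construction, of the invariance argument \eqref{psi'}--\eqref{q'i} already established for $\mathscr{R}$ — to choose the particular resolution \eqref{sigmaresolution} given by $\tket{\zeta}{ij} = N^{-1/2}\tket{\chi}{ij}$. With this choice the gram matrix of the $\tket{\zeta}{ij}$ is $F = N^{-1}DGD$, so by equation \eqref{Ainv} (together with $DG^{\frac{1}{2}}W > 0$) its positive root is $F^{\frac{1}{2}} = N^{-1/2}DG^{\frac{1}{2}}W$, whose diagonal blocks, read off from \eqref{DXG} and \eqref{Froot}, are $H^{(ii)} = N^{-1/2}\left(X^{(ii)}\right)^2$.

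The crux is then a short collapse. Since $X^{(ii)} > 0$ we have $\left(H^{(ii)}\right)^{-\frac{1}{2}} = N^{1/4}\left(X^{(ii)}\right)^{-1}$, and substituting this together with $\tket{\chi}{ik} = \sum_l X^{(ii)}_{lk}\tket{\psi}{il}$ (equation \eqref{chi}) into the definition of $\tket{\phi}{ij}$ makes the factors of $X^{(ii)}$ cancel, leaving $\tket{\phi}{ij} = N^{-1/4}\tket{\psi}{ij}$. Plugging this into \eqref{p'_irho'_i} gives $p_i'\rho_i' = cN^{-1/2}p_i\rho_i$, and since $\text{Tr}(G) = \sum_{i=1}^m p_i = 1$ the normalisation constant evaluates to $c = N^{1/2}$, whence $p_i'\rho_i' = p_i\rho_i$ for every $i$ and $\widetilde{P}' = \widetilde{P}$. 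This is exactly $\mathscr{S}\circ\mathscr{R} = \mathrm{id}$ and, with the right-inverse property from theorem \eqref{Ronto}, proves the claim.

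I expect the one genuinely delicate point to be the justification of the convenient choice $\tket{\zeta}{ij} = N^{-1/2}\tket{\chi}{ij}$: it is legitimate only because the output $\widetilde{P}'$ of the onto construction does not depend on which unitary resolution of $q_i\sigma_i$ into pure states one picks. I would therefore first state the left-inverse claim for an arbitrary resolution and then observe that replacing $\tket{\zeta}{ij}$ by $\sum_k (V^{(i)})_{kj}\tket{\zeta}{ik}$ for block unitaries $V^{(i)}$ merely conjugates the relevant matrices by the block-diagonal unitary $V_D = \mathrm{Diag}(V^{(1)},\dots,V^{(m)})$ and leaves $p_i'\rho_i'$ untouched, exactly as in the invariance argument already given for $\mathscr{R}$. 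Everything else is bookkeeping with the block indices.
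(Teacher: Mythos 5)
Your proposal is correct and rests on the same mathematical content as the paper's proof: both recover $\widetilde{P}$ from $\mathscr{R}(\widetilde{P})$ via the explicit gram-matrix relation $G \propto D_A F D_A$ and dispose of the pure-state-decomposition ambiguity by the block-diagonal-unitary covariance argument. The only difference is packaging --- you organize this as ``the onto construction is a two-sided inverse,'' and in doing so you actually verify the collapse $\tket{\phi}{ij} \propto \tket{\psi}{ij}$ explicitly, a step the paper's proof of theorem \eqref{Roneone} asserts rather than computes.
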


\begin{proof}
We need to prove that if $ \mathscr{R} \left( \widetilde{P} \right) = \mathscr{R} \left( \widetilde{P'} \right) $ then $\widetilde{P} = \widetilde{P'}$, $\forall \; \widetilde{P}, \widetilde{P'} \in \ens$. Let's denote $\widetilde{Q} =\mathscr{R} \left( \widetilde{P} \right) =\{ q_i, \sigma_i \}_{i=1}^{m} $ and $\widetilde{Q'} =\mathscr{R} \left( \widetilde{P'} \right)=\{ q'_i, \sigma'_i \}_{i=1}^{m} $. Let $\widetilde{P} = \{ p_i, \rho_i \}_{i=1}^{m} $ and $\widetilde{P'} = \{ p'_i, \rho'_i \}_{i=1}^{m} $. 

Given that $\mathscr{R}\left( \widetilde{P} \right) = \widetilde{Q}$. This implies the following: for any pure state decomposition of the states $\{ q_i \sigma_i \}_{i=1}^{m}$, with a corresponding gram matrix $F$, there exists a corresponding pure state decomposition of the states $\{ p_i \rho_i \}_{i=1}^{m}$, with a corresponding gram matrix $G$, such that $G = c D_{A} F D_{A}$, where $D_A$ is as defined in equation \eqref{DA} and $F^\frac{1}{2}$ is as defined in equation \eqref{Froot} and $c$ being the normalization constant. 

Similarly, given that $\mathscr{R}\left( \widetilde{P'} \right) = \widetilde{Q'}$, any pure state decomposition of the states $\{ q'_i \sigma'_i \}_{i=1}^{m}$, with a corresponding gram matrix $F'$, there exists a corresponding pure state decomposition of the states $\{ p'_i \rho'_i \}_{i=1}^{m}$, with a corresponding gram matrix $G'$, such that $G' = c' {D'}_{A} F {D'}_{A}$, where all the primed quantities ${D'}_A$ and ${F'}^\frac{1}{2}$ are defined similar to unprimed quantities in the equations \eqref{DA} and \eqref{Froot} and $c'$ is the corresponding normalization constant.

That $\widetilde{Q}_1 = \widetilde{Q}_2$ implies that for any choice of pure state decomposition of the primed and unprimed ensemble states, there exists a block-diagonal unitary $U_D$ of the form given in equation \eqref{UD}, such that the gram matrices $F$ and $F'$ can be related by the relation: $F' = {U_D}^\dag F {U_D}$. It also implies that ${F'}^\frac{1}{2} = {U_D}^\dag F^\frac{1}{2} {U_D}$, ${D'}_A = {U_D}^\dag D_A {U_D}$. Thus we get the relation that $G' = {U_D}^\dag G {U_D}$. Thus the corresponding pure state decompositions of $\widetilde{P}$ and $\widetilde{P'}$ are related through an equation similar to equation \eqref{psi'} which implies that $\widetilde{P} = \widetilde{P'}$. 

Hence we have proved that $\mathscr{R}\left(\widetilde{P}'\right)=\mathscr{R}\left(\widetilde{P}\right)$ $\Longleftrightarrow \widetilde{P}'=\widetilde{P}$. Hence $\mathscr{R}$ is one to one. 
\end{proof}

The theorems \eqref{Roneone} and \eqref{Ronto} jointly establish that the map $\mathscr{R}$ is invertible. We summarize all that we have done in this section in the following:

\textbf{Hence we have proved the existence of a bijective function $\mathbf{\mathscr{R}: \ens \longrightarrow \ens}$ such that the optimal POVM for the MED of any LI ensemble $\mathbf{\widetilde{P} \in \ens}$, which is given by $\mathbf{\mathscr{P}\left(\widetilde{P}\right)}$, satisfies the following relation:}

\begin{equation}
\label{SOLFORM}
\mathbf{\mathscr{P}\left( \widetilde{P} \right) = PGM \left( \mathscr{R} \left( \widetilde{P} \right) \right).}
\end{equation}

\textbf{The inverse map $\mathbf{\mathscr{R}^{-1}}$ has an analytic expression:}

\begin{equation}
\label{invR}
\mathbf{\mathscr{R}^{-1}\left( \{ q_i, \sigma_i \}_{i=1}^{m} \right) = \{p_i, \rho_i \}_{i=1}^{m},}
\end{equation}

\textbf{where, if } \begin{itemize}
\item $\mathbf{q_i \sigma_i = \dfrac{1}{\sum_{s=1}^{m}\sum_{t_s=1}^{r_s} \tbraket{\chi}{st_s}{\chi}{st_s}} \sum_{l=1}^{m}\sum_{k_l=1}^{r_l} \ketbrat{\chi}{l k_l}{\chi}{l k_l}}$ 
\item $\mathbf{p_i \rho_i =  \dfrac{1}{\sum_{s=1}^{m}\sum_{t_s=1}^{r_s} \tbraket{\psi}{st_s}{\psi}{st_s}}\sum_{l=1}^{m}\sum_{k_l=1}^{r_l} \ketbrat{\psi}{l k_l}{\psi}{l k_l}}$
\end{itemize}

\textbf{are pure state decompositions of the states in} $\mathbf{\widetilde{Q}}$ \textbf{and} $\mathbf{\widetilde{P}}$\textbf{ respectively, then}  $\mathbf{\{ \tket{\chi}{i j_i} \}_{i=1, \; j_i =1}^{i=m, \; j_i = r_i}}$\textbf{ and }$\mathbf{\{ \tket{\psi}{i j_i} \}_{i=1, \; j_i =1}^{i=m, \; j_i = r_i}}$\textbf{ are related through the transformation:}

\begin{equation}
\label{psichi}
\mathbf{\tket{\psi}{i j} = c \sum_{k=1}^{r_i} \left( \left(H^{(ii)}\right)^{-\frac{1}{2}} \right)_{k j} \tket{\chi}{ik}, \; \forall \; 1 \leq i \leq m, \; 1 \leq j \leq r_i,}
\end{equation} \textbf{where $\mathbf{c = \dfrac{1}{\sqrt{\sum_{s=1}^{m}\sum_{t,t_1,t_2 = 1}^{r_s} \left( \left( H^{(ss)} \right)^{-\frac{1}{2}} \right)_{t t_1} \left( F \right)^{(s \; s)}_{t_1 \; t_2}  \left( \left( H^{(ss)} \right)^{-\frac{1}{2}} \right)_{t_2 t}}}}$ and where $\mathbf{F}$ is the gram matrix of the set $\mathbf{\{ \tket{\chi}{i j_i} \}_{i=1, \; j_i =1}^{i=m, \; j_i = r_i}}$ and $\mathbf{H^{(ii)}}$'s are as defined in equation \eqref{Froot}.} 

\section{Comparing MED for Mixed LI ensembles and LI pure state ensembles}
\label{compareMEDP}

Minimum Error Discrimination is the task of extracting information about a state, by discarding some of the uncertainty of which state Alice sends Bob from the ensemble. Heuristically, one can expect that Bob is required to extract \emph{more} information while performing MED of an ensemble of $n$ LI pure states, which span $\mathcal{H}$, compared to an ensemble of $m$ ($m<n$) LI mixed states, where the supports of these $m$ states also span $\mathcal{H}$. This is because Bob requires to ``probe" the first ensemble ``deeper" compared to the second ensemble of states. This is better appreciated when comparing the MED of a mixed state ensemble and an ensemble of LI pure states which form pure state decompositions of the mixed states in the former. In this case it is a natural to ask if, generally, the optimal POVM for the LI pure state ensemble is a pure state decomposition of the optimal POVM for the mixed state ensemble, i.e., when a mixed state ensemble $\{ p_i, \rho_i \}_{i=1}^{m}$, with optimal POVM $\{ \
Pi_i \}_{i=1}^{m}$, and a pure state ensemble $\{ \lambda_{ij_i}, \ketbra{\psi'_{ij_i}}{\psi'_{ij_i}} \}_{i=1,j_i=1}^{i=m, j_i=r_i}$, with optimal POVM $\{ \ketbra{w'_{ij_i}}{w'_{ij_i}}\}_{i=1,j_i=1}^{i=m, j_i=r_i}$, are related by $p_i \rho_i = \sum_{j=1}^{r_i} \lambda_{ij} \ketbra{\psi'_{ij}}{\psi'{ij}}$ is it generally true that $\Pi_i = \sum_{j=1}^{r_i} \ketbra{w'_{ij}}{w'_{ij}}$, $\forall \; 1 \leq i \leq m$? In general, the answer is no. But we will now show that for every LI mixed state ensemble, one can find a corresponding pure state decomposition such that the optimal POVM for the MED of the ensemble of these LI pure states is a pure state decomposition of the optimal POVM for MED of the mixed state ensemble. 

Let equation \eqref{rhodecomposition} give a pure state decomposition of $p_i \rho_i, \; \forall \; 1 \leq i \leq m$. Then corresponding to the states $\{ \tket{\psi}{ij_i} \}_{i=1, j_i = 1}^{i=m, j_i = r_i}$, there exist a unique set of states $\{ \tket{u}{ij_i} \}_{i=1, j_i = 1}^{i=m, j_i = r_i}$, given by equation \eqref{u}, and a unique $n \times n$ unitary $W$, such that the projectors of the optimal POVM for the ensemble $\{p_i, \rho_i \}_{i=1}^{m}$ are given by equation \eqref{mixedP} and the matrix $DG^\frac{1}{2}W >0$, where $G^\frac{1}{2}$ is the positive definite square root of the gram matrix  $G$ of the $\tket{\psi}{ij_i}$ vectors and $D$ is defined in equation \eqref{DX}. Using $D$ we construct a new set of vectors $\{ \tket{\chi}{ij_i} \}_{i=1, j_i = 1}^{i=m, j_i = r_i}$, as given by equation \eqref{chi} and from this set we constuct a new ensemble of states $\{q_i, \sigma_i \}_{i=1}^{m}$, using equations \eqref{sigma} and \eqref{qi}. It was verified that the optimal POVM $\{ \Pi_i \}_{i=1}^{m}
$ is the PGM of the ensemble  $\{q_i, \sigma_i \}_{i=1}^{m}$. 

We now make the $U \left( r_1 \right) \times U \left( r_2 \right)  \times \cdots \times U \left( r_m \right)$ degree of freedom in choosing the pure state decomposition in equation \eqref{rhodecomposition} explicit. 

Thus, let $p_i \rho_i = \sum_{j=1}^{r_i} \ketbrat{\psi'}{ij}{\psi'}{ij}$ be a pure state decomposition of the LI states in the ensemble $p_i \rho_i, \; \forall \; 1 \leq i \leq m$, where $\tket{\psi'}{ij_i}$ and $\tket{\psi}{ij_i}$ are related by equation \eqref{psi'}, where $U'_D$ is a block diagonal unitary given by equation \eqref{U'D}. $U'_D$ is a variable for now; it's value will be fixed later. Corresponding to the primed vectors $\tket{\psi'}{ij_i}$, we have $\tket{u}{ij_i} \longrightarrow \tket{u'}{ij_i}$, as per equation \eqref{u'}, $W \longrightarrow W' = {U'_D}^\dag W U'_D$, $G \longrightarrow G'= {U'_D}^\dag G U'_D$ , $G^\frac{1}{2}W \longrightarrow  {G'}^\frac{1}{2}W' = {U'_D}^\dag G^\frac{1}{2}W U'_D$ and $\ket{w_{ij_i}} \longrightarrow \ket{w'_{ij_i}} = \sum_{l=1}^{m}\sum_{k_l=1}^{r_l} \left( {G'}^\frac{1}{2} W' \right)^{(l \; i)}_{k_l \; j_i} \tket{u'}{l k_l}$ (equation \eqref{w'expandu'}). $G^\frac{1}{2}W \longrightarrow {U'_D}^\dag G^\frac{1}{2}W U'_D$ implies that $X^{ \left( i j \right)} \
longrightarrow  {X'}^{ \left( 
i j \right)}= {{U'}^{ \left( i \right)}}^\dag X^{ \left( i j \right)} {U'}^{ \left( j \right)}$. In particular we can choose ${U'}^{ \left( i \right)}$ to be such that $ {X'}^{ \left( i i \right)}$ are diagonal, $\forall \; 1 \leq i \leq m$. This fixes the block diagonal unitary $U'_D$. Since $D \longrightarrow D' = {U'_D}^\dag D U'_D$, $D'$ is a diagonal matrix. This implies that $\tket{\chi}{ij_i}\longrightarrow \tket{\chi'}{ij_i} = \sum_{l=1}^{m}\sum_{k_l=1}^{r_l} \left(D'\right)^{(l \; i)}_{k_l \; j_i} \tket{\psi'}{l k_l}$ $=\left(D'\right)^{(i \; i)}_{j_i \; j_i} \tket{\psi'}{ij_i} $. As noted in the end of subsection \eqref{PQcorr}, the ensemble $\{q_i, \sigma_i \}_{i=1}^{m}$ remains invariant. Note that the diagonal of $D' {G'}^\frac{1}{2} W' = \sqrt{D' G' D'}$ is ${D'}^{2}$.

Let $\tket{\psi'}{ij_i} = \sqrt{\lambda_{ij_i}}\ket{\psi_{ij_i}}$, where $\ket{\psi_{ij_i}}$ are normalized. According to \cite{Singal} to solve the MED of the LI pure state ensemble $\{\lambda_{ij_i}, \ketbra{\psi_{ij_i}}{\psi_{ij_i}} \}_{i=1, j_i = 1}^{i=m, j_i = r_i}$, we need to find an $n \times n$ positive definite diagonal matrix $D''$, such that the diagonal of the positive square root of the matrix $D'' G' D''$ is ${D''}^2$. Here $G'$ is the gram matrix corresponding to the ensemble $\{\lambda_{ij_i}, \ketbra{\psi_{ij_i}}{\psi_{ij_i}} \}_{i=1, j_i = 1}^{i=m, j_i = r_i}$. But we have already found the solution: $D'' = D'$. In this case the optimal POVM is then given by $\{ \ketbra{w'_{ij_i}}{w'_{ij_i}} \}_{i=1,j_i=1}^{i=m, j_i=r_i}$. And we know that $\Pi_i = \sum_{j=1}^{r_i} \ketbra{w'_{ij_i}}{w'_{ij_i}}$, $\forall \; 1 \leq i \leq m$.
Also, just as shown in \cite{Singal}, $\{ \ketbra{w'_{ij_i}}{w'_{ij_i}} \}_{i=1,j_i=1}^{i=m, j_i=r_i}$ is the PGM of the ensemble $\{ \lambda'_{ij_i}, \ketbra{\psi_{ij_i}}{\psi_{ij_i}} \}_{i=1,j_i = 1}^{i=m,j_i=r_i}$, where $\lambda'_{ij_i} =\dfrac{ \left(   \left( D' \right)^{(i \; i)}_{j_i j_i} \right)^2 \lambda_{ij_i}  } { Tr\left(D' G' D'\right)}$. But just as noted above $\tket{\chi'}{ij_i} = \sqrt{ Tr \left( D'G'D' \right) } \sqrt{\lambda'_{ij_i}}\ket{\psi_{ij_i}}$, $\forall \; 1 \leq i \leq m, \; 1 \leq j_i \leq r_i$. \emph{Thus, $\{ \lambda'_{ij_i}, \ketbra{\psi_{ij_i}}{\psi_{ij_i}} \}_{i=1,j_i = 1}^{i=m,j_i=r_i}$, whose PGM is the optimal POVM for the ensemble $\{ \lambda_{ij_i}, \ketbra{\psi_{ij_i}}{\psi_{ij_i}} \}_{i=1,j_i = 1}^{i=m,j_i=r_i}$, is a pure state decomposition of the ensemble $\{q_i, \sigma_i \}_{i=1}^{m}$ $ ( =\mathscr{R} \left( \widetilde{P} \right))$, whose PGM is the optimal POVM for the ensemble $\{ p_i, \rho_i \}_{i=1}^{m}$, where $\{ \lambda_{ij_i}, \ketbra{\psi_{ij_i}}{\psi_{
ij_i}} \}_{i=1,j_i = 1}^{i=m,j_i=r_i}$ itself is a pure state decomposition of 
the ensemble $\{ p_i, \rho_i \}_{i=1}^{m}$}.

The feature that ensures that there is a LI pure state decomposition of the mixed state ensemble, such that the optimal POVM of the LI pure state ensemble is a pure state decomposition of the optimal POVM of the LI mixed state ensemble, is the spectral decomposition of the matrices $X^{(ii)}$. This begs the question: for any LI mixed state ensemble, is such a LI pure state decomposition unique? The key feature that is required is that the ${X'}^{(ii)}$ matrices are diagonalized. Hence there are as many pure state decompositions of the mixed state ensemble with this property as there are spectral decompositions of the $D$ matrix. If $X^{(ii)}$ has $s_i$ distinct eigenvalues and the degeneracy of the $j_{i}$-th eigenvalue $( 1 \leq j_i \leq s_i)$ has a degeneracy of $k_{j_i}$\footnote{Needless to say, $\sum_{j_i}^{s_i} k_{j_i} = r_i$.}, then there is a $U(k_{1_1}) \times U(k_{2_1})  \times \cdots \times U(k_{s_1}) \times U(k_{1_2}) \times U(k_{2_2})  \times \cdots \times U(k_{s_2}) \times \cdots \times U(k_{1_
m}) \times U(k_{2_m})  \times \cdots \times U(k_{s_m})$ degree of freedom in choosing a pure state decomposition of the mixed state ensemble with this property. 

What about the converse? Consider an ensemble of pure states $\{ \lambda_i, \ketbra{\psi_i}{\psi_i} \}_{i=1}^{n}$ whose optimal POVM is $\{ \ketbra{v_i}{v_i} \}_{i=1}^{n}$. Partition the ensemble into disjoint subsets and sum over the elements in each subset and collect all such summations to form a new ensemble $\{p_i, \rho_i \}_{i=1}^{m}$, whose optimal POVM, let's say is given by $\{\Pi_i \}_{i=1}^{m}$. It is generally not the case that $\{ \ketbra{v_i}{v_i} \}_{i=1}^{n}$ is a pure state decomposition of elements in $\{\Pi_i \}_{i=1}^{m}$. So for which pure state ensembles is this true? Let's re-index the pure state ensemble: $i \longrightarrow (i, j_i)$, so that $p_i \rho_i = \sum_{j=1}^{r_i} \lambda_i \ketbra{\psi_{ij_i}}{\psi_{ij_i}}$. While performing the MED of the LI pure state ensemble, if the matrix $DG^\frac{1}{2}W$ is such that its block diagonals\footnote{i.e., the first $r_1 \times r_1$ diagonal block, the second $r_2 \times r_2$ block etc} are diagonal, then it is easy to see that the 
relation $\Pi_i = \sum_{j=1}^{r_i} \ketbra{v_{ij}}{v_{ij}}$ also holds true.

Another question is if, given the problem of the MED of a LI mixed state ensemble, can one substitute the problem with the MED of a pure state decomposition such that the optimal POVM of the latter is a pure state decomposition of the former? The answer, unfortunately, is no. The reason being that to substitute the mixed state ensemble MED problem with the pure state ensemble MED problem one needs to first obtain the $n \times n$ unitary $W$ such that when $D G^\frac{1}{2} W$ is constructed (where $D$ is given by equation \eqref{DX}), it is positive definite. This is already equivalent to finding a solution for the MED of the mixed state ensemble. 

We know that the optimal POVM of a pure state LI ensemble is given by its own PGM iff the diagonal of the positive square root of the ensemble's gram matrix is a multiple of the identity. How does this condition change when we're given to perform the MED of a LI mixed state ensemble? In the following we prove that this occurs iff the diagonal blocks of  $G^\frac{1}{2}$ are diagonalized and when the diagonal of $G^\frac{1}{2}$ is a multiple of the identity. 

\begin{theorem}
\label{PGMOPT}
For an ensemble $\widetilde{P} \in \ens$ to satisfy $\mathscr{R}\left(\widetilde{P}\right) = \widetilde{P}$ it is necessary and sufficient that all eigenvalues of all the block diagonal matrices of $G^\frac{1}{2}$ are equal.
\end{theorem}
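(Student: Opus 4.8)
The plan is to read the fixed-point condition $\mathscr{R}(\widetilde{P})=\widetilde{P}$ through the explicit construction of $\mathscr{R}$: since $\mathscr{R}(\widetilde{P})=\widetilde{Q}$ with $q_i\sigma_i$ built from the $\tket{\chi}{ij_i}$ of \eqref{chi} via \eqref{sigma}--\eqref{qi}, the condition $\mathscr{R}(\widetilde{P})=\widetilde{P}$ is precisely $q_i\sigma_i=p_i\rho_i$ for all $i$ (the normalisations take care of themselves because both ensembles lie in $\ens$). So I would fix an arbitrary pure-state decomposition \eqref{rhodecomposition}, form $G$, obtain the positive-definite diagonal blocks $X^{(ii)}$ of $G^{\frac{1}{2}}W$, and substitute $\tket{\chi}{ij_i}=\sum_{k}X^{(ii)}_{k\,j_i}\tket{\psi}{ik}$ into \eqref{sigma}--\eqref{qi}. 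Using that $X^{(ii)}$ is Hermitian this yields
\[
q_i\sigma_i=\frac{1}{N}\sum_{k,k'=1}^{r_i}\big((X^{(ii)})^2\big)_{kk'}\,\ketbrat{\psi}{ik}{\psi}{ik'},\qquad N=\operatorname{Tr}(DGD),
\]
which I would compare with $p_i\rho_i=\sum_{k}\ketbrat{\psi}{ik}{\psi}{ik}$.

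For necessity, the operators $\{\ketbrat{\psi}{ik}{\psi}{ik'}\}_{k,k'}$ within a block are linearly independent (pair against the biorthogonal vectors $\tket{u}{ik}$ of \eqref{psiandu}), so $q_i\sigma_i=p_i\rho_i$ forces the coefficient matrices to agree, i.e.\ $(X^{(ii)})^2=N\,\mathbb{1}_{r_i}$, hence $X^{(ii)}=\sqrt{N}\,\mathbb{1}_{r_i}$ for \emph{every} $i$ with a single constant $\sqrt{N}$. Then $D=\sqrt{N}\,\mathbb{1}_{n}$ is scalar, so $DGD=N\,G$ and its positive square root is $\sqrt{N}\,G^{\frac{1}{2}}$; comparing with $DG^{\frac{1}{2}}W=\sqrt{N}\,G^{\frac{1}{2}}W=\sqrt{DGD}$ (from \eqref{Ainv}) gives $W=\mathbb{1}$. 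Consequently the diagonal blocks of $G^{\frac{1}{2}}=G^{\frac{1}{2}}W$ are exactly $X^{(ii)}=\sqrt{N}\,\mathbb{1}_{r_i}$, so all eigenvalues of all block diagonals of $G^{\frac{1}{2}}$ equal $\sqrt{N}$.

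For sufficiency I would reverse this. If all block diagonals of $G^{\frac{1}{2}}$ have equal eigenvalues then, each being positive definite, each equals $c\,\mathbb{1}_{r_i}$ with one common $c>0$. Choosing $D=c\,\mathbb{1}_{n}$ and $W=\mathbb{1}$ gives $DG^{\frac{1}{2}}W=c\,G^{\frac{1}{2}}>0$, so by corollary \eqref{corollary1} (equivalently condition \textbf{A}) this is the optimal solution; then $\tket{\chi}{ij}=c\,\tket{\psi}{ij}$, whence $q_i\sigma_i=p_i\rho_i$ after normalisation, i.e.\ $\mathscr{R}(\widetilde{P})=\widetilde{P}$. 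I would close with a remark that the statement is well posed independently of the chosen decomposition: under a change of decomposition by a block-diagonal unitary $U'_D$ one has $G^{\frac{1}{2}}\mapsto {U'_D}^{\dag}G^{\frac{1}{2}}U'_D$, so each diagonal block is merely conjugated by $U'^{(i)}$ and its spectrum is unchanged.

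The main obstacle I anticipate is the \emph{simultaneous} forcing of $X^{(ii)}\propto\mathbb{1}$ together with $W=\mathbb{1}$ and a single common eigenvalue: the fixed-point equation by itself only gives the block-wise scalars, and one must feed these back through $DG^{\frac{1}{2}}W=\sqrt{DGD}$ to pin down $W=\mathbb{1}$ and to see that the \emph{common} normalisation $N$ rules out independent per-block constants (which would otherwise fail to solve condition \textbf{A} with $W=\mathbb{1}$). The secondary technical point, needed to equate coefficients legitimately, is the linear independence of the rank-one operators $\ketbrat{\psi}{ik}{\psi}{ik'}$, which is routine via the biorthogonal set.
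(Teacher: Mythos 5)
Your proposal is correct, and for the necessity direction it takes a genuinely more direct route than the paper. The paper's own proof of necessity detours through the machinery of section \eqref{compareMEDP}: it invokes the special pure-state decomposition that diagonalizes the $X^{(ii)}$, the associated pure-state ensemble with weights $\lambda'_{ij_i} = \bigl( (D')^{(i\,i)}_{j_i j_i}\bigr)^2 \lambda_{ij_i}/\mathrm{Tr}(D'G'D')$, and the observation that a block-diagonal unitary relating two decompositions of the same states into the \emph{same} linearly independent vectors must be diagonal; only then does it conclude $D' \propto \mathbb{1}$ and hence $W=\mathbb{1}$. You instead work with an arbitrary decomposition, write $q_i\sigma_i = \mathrm{Tr}(DGD)^{-1}\sum_{k,k'}\bigl((X^{(ii)})^2\bigr)_{kk'}\ketbrat{\psi}{ik}{\psi}{ik'}$ explicitly, and equate coefficients against $p_i\rho_i$ using the linear independence of the operators $\ketbrat{\psi}{ik}{\psi}{ik'}$ (justified via the biorthogonal set $\tket{u}{ik}$); this immediately forces $(X^{(ii)})^2 = \mathrm{Tr}(DGD)\,\mathbb{1}_{r_i}$ with one common constant, after which your derivation of $W=\mathbb{1}$ from $DG^{1/2}W = \sqrt{DGD} = \sqrt{N}\,G^{1/2}$ coincides with the paper's. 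Your sufficiency argument is essentially identical to the paper's ($D''$ a multiple of the identity verifies condition \textbf{A}, hence $\tket{\chi}{ij}\propto\tket{\psi}{ij}$ and $q_i\sigma_i=p_i\rho_i$). Your route is self-contained and avoids any dependence on the pure-state-decomposition results; what the paper's route buys is continuity with the narrative of section \eqref{compareMEDP}, tying the fixed-point condition to the pure-state PGM-optimality criterion. Your closing remark on invariance of the spectrum of the diagonal blocks under a change of decomposition is a worthwhile addition the paper leaves implicit.
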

\begin{proof}
\textbf{Necessary Part:} Let $\mathscr{R}\left(\widetilde{P}\right) = \widetilde{P}$. Let the pure state decomposition of $\widetilde{P}$ whose optimal POVM is a pure state decomposition of the optimal POVM for MED of $\widetilde{P}$ be $\{ \lambda_{ij_i}, \ketbra{\psi_{ij_i}}{\psi_{ij_i}} \}_{i=1,j_i = 1}^{i=m,j_i=r_i}$. Hence we have $p_i \rho_i = \sum_{j=1}^{r_i} \lambda_{ij} \ketbra{\psi_{ij}}{\psi_{ij}}$, $\forall \; 1 \leq i \leq m$. It was mentioned above that there exists a pure state decomposition of $\mathscr{R}\left( \widetilde{P} \right)$ of the form $\{ \lambda'_{ij_i}, \ketbra{\psi_{ij_i}}{\psi_{ij_i}} \}_{i=1,j_i = 1}^{i=m,j_i=r_i}$, who PGM is the optimal POVM of $\{ \lambda_{ij_i}, \ketbra{\psi_{ij_i}}{\psi_{ij_i}} \}_{i=1,j_i = 1}^{i=m,j_i=r_i}$. Since $\mathscr{R}\left(\widetilde{P}\right) = \widetilde{P}$ it follows that the $\sqrt{\lambda'_{ij}}\ket{\psi_{ij}}$ ($ = \tket{\psi'}{ij_i}$) vectors and the $\sqrt{\lambda_{ij}}\ket{\psi_{ij}}$ ($ = \tket{\psi}{ij_i}$) vectors are related by a 
block diagonal unitary transformation, given in equation \eqref{psi'}. But since the set $\{ \ket{\psi_{ij}} \}_{j=1}^{r_i}$ are linearly independent, it follows that $U'_D$ must be a diagonal matrix. This can only mean that both the ensembles $\{ \lambda'_{ij_i}, \ketbra{\psi_{ij_i}}{\psi_{ij_i}} \}_{i=1,j_i = 1}^{i=m,j_i=r_i}$ and $\{ \lambda_{ij_i}, \ketbra{\psi_{ij_i}}{\psi_{ij_i}} \}_{i=1,j_i = 1}^{i=m,j_i=r_i}$ are equal, as well. In the beginning of section \eqref{compareMEDP}, it was noted that $\sqrt{\lambda'_{ij_i}} \ket{\psi_{ij_i}}$ and $\sqrt{\lambda_{ij_i}} \ket{\psi_{ij_i}}$ are also related through $\lambda'_{ij_i} =\dfrac{ \left(   \left( D' \right)^{(i \; i)}_{j_i j_i} \right)^2 \lambda_{ij_i}  } { Tr\left(D' G' D'\right)}$, $\forall \; 1 \leq i \leq m, \; 1 \leq j_i \leq r_i$. But since $\lambda'_{ij_i} = \lambda_{ij}$ , $\forall \; 1 \leq i \leq m, \; 1 \leq j_i \leq r_i$, this implies that $D'$ is a multiple of the identity. This 
implies that $D' G' D' \propto G'$ which implies that $ \sqrt{D' G' D'} = D' {G'}^\frac{1}{2} W' \propto {G'}^\frac{1}{2}$. This implies that $W = \mathbb{1}_n$. $D'$ is the block diagonal matrix one gets by ``extracting'' the diagonal blocks of ${G'}^\frac{1}{2} W'$. Since $W' = \mathbb{1}_n$, $D'$ is the block diagonal matrix ``extracted'' from ${G'}^\frac{1}{2}$. Similarly, $D$ is the block diagonal matrix ``extracted'' from ${G}^\frac{1}{2}$. And since $D'$ is a multiple of identity and since $D'$ and $D$ are related by a unitary transformation, $D$ is also a multiple of the identity. This tells us that the diagonal blocks of $G^\frac{1}{2}$, i.e., the matrices $X^{(ii)}$, are positive definite diagonal matrices, with equal diagonals. Hence all eigenvalues of all the block diagonal matrices of $G^\frac{1}{2}$ are equal.

\textbf{Sufficient Part:} If all eigenvalues of all the block diagonal matrices of $G^\frac{1}{2}$ are equal, then these diagonal blocks are diagonal matrices themselves. Let $D''$ be the matrix comprising of only the diagonal blocks of $G^\frac{1}{2}$. $D''$ is, thus, a multiple of the identity. Note that the the block-diagonal of $D'' {G'}^\frac{1}{2}$ is $ {D''}^2$. Hence we have found a block-diagonal positive definite matrix $D''$ such that the diagonal blocks of the positive square root of $D'' G D''$ is given by ${D''}^2$, which implies that we have solved the MED problem for the ensemble $\widetilde{P}$. Using $D''$, we can construct the vectors $\tket{\chi}{ij_i}$ from the vectors $\tket{\psi}{ij_i}$ using equation \eqref{chi} and then construct the states $q_i \sigma_i$ using equation \eqref{sigma} and equation \eqref{qi}. Since $D''$ is simply a multiple of the identity, it isn't difficult to see that $q_i \sigma_i = p_i \rho_i$, $\forall \; 1 \leq i \leq m$. This proves that $\mathscr{R}\left(R   
\right) = P$.

Hence proved. 
\end{proof}

\section{Solution For the MED problem}
\label{Solution}

The necessary and sufficient condition to solve the MED for a general LI ensemble as specified by \textbf{A} (on page \pageref{AA}) suggest a technique to solve the problem. In this section we give this technique without going into the theoretical details which justify the claim that it can be used effectively to obtain the optimal POVM for the MED of any ensemble $\widetilde{P} \in \ens$. This is because this techhnique is a generalization of the technique given in \cite{Singal}, wherein all the relevant theoretical background has been developed for LI of pure state ensembles. The theoretical background for the mixed states ensemble case is a trivial generalization of that for the pure state ensemble case; it follows the same sequence of steps as that for the LI pure state ensemble case. In the following we explain what the technique is. 

We assume that we know the solution for the MED of some ensemble $\widetilde{P}_0=\{ p^{(0)}_i, \rho^{(0)}_i \}_{i=1}^{m} \in \ens$ and want to obtain the solution for the MED of another ensemble $\widetilde{P}_1=\{ p^{(1)}_i, \rho^{(1)}_i  \}_{i=1}^{m} \in \ens$. Let $p^{(0)}_i \rho^{(0)}_{i} = \sum_{j=1}^{r_i} \ketbrat{\psi^{(0)}}{ij}{\psi^{(0)}}{ij}, \; \forall \; 1 \leq i \leq m,$ be a pure state decomposition for the ensemble $\widetilde{P}_0$. And let the gram matix corresponding to the set $\{ \tket{\psi^{(0)}}{ij_i} \}_{i=1,j_i =1 }^{m,r_i}$ be $G_0$. Similarly, let $p^{(1)}_i \rho^{(1)}_{i} = \sum_{j=1}^{r_i} \ketbrat{\psi^{(1)}}{ij}{\psi^{(1)}}{ij}, \; \forall \; 1 \leq i \leq m,$ be a pure state decomposition for the ensemble $\widetilde{P}_1$. And let the gram matix corresponding to the set $\{ \tket{\psi^{(1)}}{ij_i} \}_{i=1,j_i =1 }^{m,r_i}$ be $G_1$. Knowing the solution for the MED of $\widetilde{P}_0$ implies that we know a block diagonal matrix $D_0$, of the form as given by equation \eqref{DX}, such that the diagonal-block of positive square root of $D_0G_0D_0$ is $D_0^2$ (in accordance with the rotationally invariant necessary and sufficient conditions given by \textbf{A} on page \pageref{AA}). Let's rewrite equation \eqref{Ainv} in the following form:

\begin{equation}
\label{EOY}
\left( D G^\frac{1}{2} W \right)^2 - DGD = 0
\end{equation}

Let's define a linear function $G(t) \equiv (1-t) G_0 + t G_1$, where $t \in [0,1]$. So $G(0)=G_0$ and $G(1)=G_1$. Note that $G(t) > 0$ and $Tr \left( G(t) \right) =1, \; \forall \; 0 \leq t \leq 1$. Thus for any value of $t \in [0,1]$, $G(t)$ corresponds to the gram matrix of a pure state decomposition of some ensemble $\widetilde{P}_t \in \ens$\footnote{Actually, $G(t)$, for each value of $t \in [0,1]$, corresponds to a family of unitarily equivalent ensembles, i.e., $G(t)$ corresponds to the set of ensembles $\{ U \widetilde{P}_t U^\dag \; | \; U \text{ varies over } U(n) \}$. The notation $U \widetilde{P}_t U^\dag$ is the same as has been used in equation \eqref{ens1}.}. Using equation \eqref{EOY} we drag the solution for $D$ from $t=0$ where the value is known to $t=1$ where the solution isn't known. This can be done in different ways. 

\subsection{Taylor Series Expansion and Analytic Continuation} 
\label{Taylor}

A formal way of doing it is by using Taylor series expansion and analytic continuation. We start by assuming that the matrices $\left( DG^\frac{1}{2}W \right) (t), D(t)$ and $G(t)$ are analytic functions from $[0,1]$\footnote{$G(t)$ is the function mentioned above; it is linear in $t$ and hence is analytic in $t$. We will not provide for the proof of the analytic dependence of $\left( DG^\frac{1}{2}W \right) (t)$ or $ D(t)$ here since a detailed proof the same is provided in \cite{Singal} for the pure state ensemble case which can be trivially generalized to the mixed state case.}. $D(t)$ take the form

\begin{equation}
\label{DT}
D(t) = \begin{pmatrix}
            {Z^{(11)(t)}} & 0 &  \cdots & 0 \\
            0 & {Z^{(22)(t)}} &  \cdots & 0\\
            \vdots & \vdots & \ddots & \vdots \\
            0 & 0 & \cdots & {Z^{(mm)(t)}}
           \end{pmatrix},
\end{equation}

and $\left( DG^\frac{1}{2} W \right) (t)$ takes the form

\begin{equation}
\label{DGWT}
\left( DG^{\frac{1}{2}} W \right) (t)   = \begin{pmatrix}
                       \left( Z^{(11)} (t) \right)^2 &  Z^{(12)}(t) & \cdots & Z^{(1m)}(t)\\
                       Z^{(21)}(t) & \left( Z^{(22)} (t) \right)^2 & \cdots & Z^{(2m)} (t)\\
                       \vdots & \vdots & \ddots & \vdots\\
                       Z^{(m1)}(t) & Z^{(m2)}(t) & \cdots & \left( Z^{(mm)}(t) \right)^2
                       \end{pmatrix},
\end{equation}

where

\begin{equation}
\label{ZijT}
Z^{(ij)}(t) = \begin{pmatrix}
            r^{(ij)}_{11}(t) + i c^{(ij)}_{11}(t) & r^{(ij)}_{12}(t) + i c^{(ij)}_{12}(t) &  \cdots & r^{(ij)}_{1r_j}(t) + i c^{(ij)}_{1r_j}(t) \\
            r^{(ij)}_{21}(t) + i c^{(ij)}_{21}(t) & r^{(ij)}_{22}(t) + i c^{(ij)}_{22}(t) &  \cdots & r^{(ij)}_{2r_j}(t) + i c^{(ij)}_{2r_j}(t) \\
            \vdots & \vdots & \ddots & \vdots \\
            r^{(ij)}_{r_i 1}(t) + i c^{(ij)}_{r_i1}(t) & r^{(ij)}_{r_i2}(t) + i c^{(ij)}_{r_i2}(t) &  \cdots & r^{(ij)}_{r_ir_j}(t) + i c^{(ij)}_{r_ir_j}(t) \\
           \end{pmatrix},
\end{equation}
i.e, $Z^{(ij)}(t)$ are $r_i \times r_j$ matrices. Also, the hermiticity of $\left( DG^\frac{1}{2} W \right) (t)$ requires that  $c^{(ii)}_{jk}(t) = - c^{(ii)}_{kj}(t)  , \; 1 \leq i \leq m, \; 1 \leq j, k \leq r_i$. With the constraints on $c^{(ii)}_{jk}(t)$ in place, $r^{(il)}_{j_ik_l}$ and $c^{(il)}_{j_ik_l}$ are $n^2$ (dependent) variables. In the following we show how to obtain the Taylor series expansion of these variables with respect to the independent variable $t$. Note that $Z^{(ii)}$ are equal to $X^{(ii)}$ and $Z^{(ij)}(t)$ are equal to $ \left( X^({ii}) \right)^{-1} X^{(ij)}$ forall $1 \leq i \neq j \leq m$, where $X^{(ij)}$ are defined in equation \eqref{part1}. 

Taking the total derivative of both sides of equation \eqref{EOY} with respect to $t$ and set $t= 0$, we get $n^2$ coupled linear equations which can be solved for the unknowns $\dfrac{d r^{(i l)}_{j_i k_l}}{dt} |_{t=0}$ and $\dfrac{d c^{(i l)}_{j_i k_l}}{dt} |_{t=0}$, $\forall \; 1 \leq i, l \leq m,$ $1 \leq j_i \leq r_i$ and $1 \leq k_l \leq r_l$. Again taking the second order total derivative of both sides of equation \eqref{EOY} with respect to $t$ and setting $t=0$, we get $n^2$ coupled linear equations which can be solved for the unknowns $\dfrac{d^2 r^{(i l)}_{j_i k_l}}{dt^2} |_{t=0}$ and $\dfrac{d^2 c^{(i l)}_{j_i k_l}}{dt^2} |_{t=0}$, $\forall \; 1 \leq i, l \leq m,$ $1 \leq j_i \leq r_i$ and $1 \leq k_l \leq r_l$. In this way we have obtain the $K$-th order derivatives of the $r^{il}_{j_ik_l}$ and $c^{il}_{j_ik_l}$ with respect to $t$ at $t=0$. Using these derivatives we can taylor expand about $r^{il}_{j_ik_l}(t)$ and $c^{il}_{j_ik_l}(t)$ about $t=0$. Our goal is to find a solution for the values 
of $r^{il}_{j_ik_l}(1)$ and $c^{il}_{j_ik_l}(1)$. It is reasonable to divide the interval $[0,1]$ into a certain number of intervales, say $L$ intervals, so that one taylor expands within every interval and then analytically continues from the starting point of each interval to reach $t=1$ finally. The following statements are made on the basis of results in \cite{Singal}:

\begin{itemize}
\item $L\equiv \lceil|| G(0) - G(1) || n^{2}\rceil$ gives a reasonable number of intervals for very low error margin. Also beyond a certain order to which Taylor series are expanded the error margin doesn't decrease appreciably; neither does the error margin increase appreciably as $n$ increases while the order to which Taylor series is expanded remained constant. 
\item For ensembles $\widetilde{P}_1 \in \ens$, to which the gram matrix $G_1$ corresponds, one can find the starting point gram matrix $G_0$ close enough to $G_1$ such that $\lceil|| G(0) - G(1) || n^{2}\rceil=1$. This implies that the interval $[0,1]$ need not be divded into subintervals for the purpose of analytic continuation. For these cases the computational complexity of such process is $n^6$. In cases where one isn't able to obtain the starting point close enough, the computional complexity increases to $n^8$, as expected. This is because the number of intervals required to obtain the solution increases as $n^2$ with $n$. 
\end{itemize}

\subsection{Newton-Raphson Method}
\label{Newton}

Another technique to obtain the solution for the the MED of optimal POVM for a LI ensemble is to use Newton's method based on equation \eqref{EOY}. As starting point, we substitute the solutions for the MED of $G_0$ viz., $D_0$ and $D_0G_0^\frac{1}{2} W_0$, whose values we know, in equation \eqref{EOY}, along with $G_1$. The aim is to change the values of $D$ and $DG^\frac{1}{2}W$ so that the LHS of the equation converges to $0$\footnote{Despite the fact that we have no formal proof that Newton-Raphson method will necessarily converge to the desired solution for equation \eqref{EOY}, over 100,000 examples for various values of $n$ and $r_1, r_2, \cdots, r_m$ have been sampled, for which the method works. An undesirable solution would require that the LHS of equation \eqref{EOY} does converge to $0$ but that $DG^\frac{1}{2}W$ isn't positive definite. Heuristically, we can expect $D$ and $D G^\frac{1}{2} W$ to converge to the desirable solution (i.e., the solution such that $D_1G_1^\frac{1}{2}W_1 >0$) because 
our starting point has that $D_0 G_0^\frac{1}{2} W_0>0$ and is, hence, likely to be ``closer'' to our starting point; the metric being given by the Hilbert-Schmidt norm.}. The sequence of steps are the same as laid out in \cite{Singal}. This method is much simpler to implement compared to the Taylor series example and has a computational complexity of $n^6$.

\subsection{Barrier Type Interior Point Method (SDP)}
\label{SDP}

In \cite{Singal} we showed how the barrier-type interior point method has a computational complexity of $n^8$. We will summarize in brief how this barrier-type interior point method works. This is an iterative algorithm just like the Newton-Raphson method. In fact, the barrier type interior point method comprises of implementing the Newton-Raphson method to obtain the stationary point of the quantity being minimized which is the the quantity $Tr \left( Z \right) - \sum_{i=1}^{m}w_i^{(0)} Log \left( Det \left( Z - p_i \rho_i \right) \right)$. The weights $w_i^{(0)}$ have a very small value, so that the objective function varies by very little from the function $Tr \left( Z \right)$. The reason the term $\sum_{i=1}^{m}w_i^{(0)} Log \left( Det \left( Z - p_i \rho_i \right) \right)$ is added to the function $Tr \left( Z \right) $ is to ensure that if we start from a feasible point (a point where $Z^{(0)} - p_i \rho_i \geq 0, \; \forall \; 1 \leq i \leq m$), our second iterate $Z^{(1)}$ will necessarily remain in 
the feasible region. This happens because the term $\sum_{i=1}^{m}w_i^{(0)} Log \left( Det \left( Z - p_i \rho_i \right) \right)$ blows up to infinity if any of the operators $ Z - p_i \rho_i$ approaches the boundary of the positive convex set, i.e., if the eigenvalue(s) of any one of these operators approaches 0; the directional derivative would be such that the next iterate for $Z$ would remain in the feasible region. Computing the directional derivative involes computing an $n^2 \times n^2$ square matrix whose computational cost is $n^8$. Thus the computational cost of the barrier-type interior point method is $n^8$. 

\textbf{ The Taylor series method and Newton-Raphson method mentioned in sections \eqref{Taylor} and \eqref{Newton} have lower computational complexity and are simpler to implement thus giving an edge over the SDP method mentioned above. }

\section{Conclusion}

We look back over what has been done in this paper: first, the necessary and sufficient conditions for obtaining the optimal POVM for the MED for an ensembles of linearly independent states was simplified. Using the simplified conditions we proved that there exists a bijective function $\mathscr{R}$ which when acted upon any such ensemble gives another ensemble whose PGM is the optimal POVM of for the MED of the pre-image. We also obtained a closed form expression for $\mathscr{R}^{-1}$. This is a generalization of a similar result that was hitherto only proved for linearly independent pure state ensemble in \cite{Bela, Mas, Carlos}. The result also gives a rotationally invariant form of representing the necessary and sufficient conditions for the MED of an ensemble of LI states. This rotationally invariant form for the necessary and sufficient conditions of the optimal POVM is employed for two purposes: 1.) we use it to show that for every LI mixed state ensemble there exists a corresponding pure state 
decomposition so that the optimal POVM for the MED of the latter is a pure state decomposition for the MED of the former. This is then employed to show under what conditions the optimal POVM of a mixed state ensemble is given by its own PGM. 2.) We employ this rotationally invariant form of the necessary and sufficient conditions in a technique which gives us the optimal POVM for an ensemble. Our technique is compared to a standard SDP technique; that of a barrier-type interior point method. It is found that along with the advantage of our technique being simpler to implement, our technique has a lower computational complexity compared to the barrier-type IPM; our technique has a computational complexity of $n^6$ whereas the computational complexity of the latter SDP technique is $n^8$, which gives our technique an edge over the SDP technique.


\end{document}